\documentclass[sigconf]{acmart}

\AtBeginDocument{%
  \providecommand\BibTeX{{%
    \normalfont B\kern-0.5em{\scshape i\kern-0.25em b}\kern-0.8em\TeX}}}





\usepackage{tikz}
\usepackage[ruled]{algorithm2e} 

\usepackage[english]{babel}
\usepackage{listings}
\lstset
{ 
    language=Python,
    basicstyle=\small\ttfamily,
    numbers=left,
    xleftmargin=2em,
    frame=single,
    framexleftmargin=1.5em,
    stepnumber=1,
    showstringspaces=false,
    tabsize=1,
    breaklines=true,
    breakatwhitespace=false,
}

\usepackage{graphicx,mathtools,amsthm,amsmath}
\usepackage{stmaryrd}
\usepackage{ragged2e}
\usepackage{enumerate}
\usepackage{longtable}
\usepackage{array}
\usepackage{float}
\restylefloat{table}
\usepackage{hyperref}
\usepackage{wrapfig}
\usepackage{rotating}
\usepackage{pifont}
\usepackage[colorinlistoftodos]{todonotes}
\usepackage{cleveref}
\usepackage{multirow}
\usepackage{multicol}
\crefname{section}{§}{§§}

\newtheorem{thm}{Theorem}
\newtheorem*{prop*}{Proposition}
\newtheorem{prop}{Proposition}

\theoremstyle{definition}
\newtheorem{defn}{Definition}

\newcommand{\cmark}{\ding{51}}
\newcommand{\xmark}{\ding{55}}

\begin{document}

\title{Pifthon: A Compile-Time Information Flow Analyzer For An Imperative Language}

\author{Sandip Ghosal}
\email{sandipsmit@gmail.com}
\orcid{0000-0002-3063-6080}
\affiliation{%
  \institution{Indian Institute of Technology, Bombay, India}
}

\author{R.K. Shyamasundar}
\email{shyamasundar@gmail.com}
\affiliation{%
  \institution{Indian Institute of Technology, Bombay, India}
}

\begin{abstract}
Compile-time information flow analysis has been a promising technique for protecting confidentiality and integrity of private data. In the last couple of decades, a large number of information flow security tools in the form of run-time execution-monitors or static type systems have been developed for programming languages to analyze information flow security policies. However, existing flow analysis tools lack in precision and usability, which is the primary reason behind not being widely adopted in real application development. In this paper, we propose a compile-time information flow analysis for an imperative program based on a \textit{hybrid} (mutable + immutable) labelling approach that enables a user to detect information flow-policy breaches and modify the program to overcome violations. We have developed an information flow security analyzer for a dialect of Python language, \textit{PyX},  called \textit{Pifthon}\footnote{https://github.com/pifthon/pifthon} using the said approach . The flow-analyzer aids in identifying possible misuse of the information in sequential PyX programs corresponding to a given \textit{information flow policy} (IFP).  \textit{Pifthon} has distinct advantages like reduced labelling overhead that ameliorates usability, covers a wide range of PyX programs that include termination-and progress-sensitive channels, in contrast to other approaches in the literature. The proposed flow analysis is proved to be sound under the classical non-interference property. Further, case study and experience in the usage of {\it Pifthon\/} are provided.

\end{abstract}



\keywords{language security, information flow control, python, mutable labels}
\maketitle

\section{Introduction}
\label{sec:introduction}
Information flow control (IFC) is concerned with either preventing or facilitating (il)legal information flow in a program according to a given flow policy. Denning's security lattice model \cite{denning1976lattice} provided the initial momentum for such a goal and has given rise to refinements of such an analysis to realize the needed precision using various new security models \cite{myers2000protecting,stefan2011disjunction,kumar2017complete}. A common feature of these security models is security \textit{labels} or \textit{classes} (used interchangeably) that are elements of a security lattice and used to specify confidentiality and integrity of program objects. A labelling function $\lambda$ maps subjects (stakeholders of a program in execution) and objects (variables, files) of a program to the respective security label from the lattice. Then, the necessary condition for the flow security of a program is: if there is information flow from object $x$ to $y$, denoted by $x\rightarrow y$, then the flow is secure if $\lambda(x)\textit{ can-flow-to }\lambda(y)$ in the lattice \cite{denning1977certification}. The condition is often referred to as the \textit{Information Flow Policy} (IFP). It has given rise to a  challenging area of certification of computer programs; a program is certified to be  information flow secure if there are no violations of IFP at any point during the execution. Recent work has proposed a number of certification mechanisms in the form of run-time monitors \cite{le2007automaton,askarov2009tight,austin2009efficient,stefan2011flexible,buiras2014dynamic} or statically checked type languages \cite{volpano_smith,abadi1999core,zdancewic2002secure_a,smith2001new,honda2002uniform,myers2001jif,pottier2002information,banerjee2002secure,simonet2003flow,broberg2013paragon} often augmented with run-time label checking support \cite{myers2001jif,zheng2007dynamic}.

The efficiency of a flow certification mechanism primarily depends on precise flow analysis. The precision could be defined as follows:  Let $F$ be the set of all possible information flows in a system, and $A$ be the subset of $F$ authorized by a given IFP. Let  $E$ be the subset of $F$ ``executable'' given the flow control mechanisms in operation. The system is \textit{secure} if $E\subseteq A$, i.e., all executable flows are authorized. A secure system is \textit{precise} if $E=A$, i.e., all authorized flows are executable \cite{robling1982cryptography}.

The precision of flow analysis is contingent on distinct labelling mechanisms adopted in the current literature \cite{myers2001jif,russo2010dynamic,stefan2011flexible,zheng2007dynamic,buiras2014dynamic}. The labelling mechanism usually refers to the process of binding subjects and objects of a program with a security class (statically or dynamically) from the lattice. Each of the existing compile-time and run-time certification mechanisms chooses an object labelling method ranging over the classes outlined in Table \ref{table:label_classification}.

\begin{table}[htb]
\centering
\caption{Classification of object labelling mechanisms}
\label{table:label_classification}
\begin{tabular}{l|l|l}
\hline
\begin{tabular}{l}
{\bf pc Label}$\rightarrow$\\ \hline
{\bf labelling}$\downarrow$
\end{tabular}
& {\bf Reset} & {\bf Monotonic}\\ \hline
{\bf Static} & $P_1$ & $P_2$\\ \hline
{\bf Dynamic} & $P_3$ & $P_4$\\ \hline
\end{tabular}
\end{table}

\begin{table*}[htb]
\centering
\small
\caption{Constraints generated by the labelling mechanisms discussed in Table \ref{table:label_classification}}
\label{table:constraints}
\begin{tabular}{l|l|l|l|l}
\hline
\textbf{Program} & \multicolumn{4}{|c}{\textbf{Constraints generated by:}}\\
\cline{2-5}
 & \multicolumn{1}{|c|}{$P_1$} & \multicolumn{1}{|c|}{$P_2$} & \multicolumn{1}{|c|}{$P_3$} & \multicolumn{1}{|c}{$P_4$}\\
\hline
1 \texttt{while} $x_0$: & \multicolumn{4}{|c}{$\lambda(x_i)=\underline{x_i}$, $\forall i \in 1\dots 5$}\\
\cline{2-5}
2 \hspace{3mm} $x_4 = x_1$ & $\underline{x_0}\oplus\underline{x_1}\leqslant\underline{x_4}$ & $\underline{x_0}\oplus\underline{x_1}\leqslant\underline{x_4}$ & $\underline{x_0}\oplus\underline{x_1}\leqslant\underline{x_4}$ & $\underline{x_0}\oplus\underline{x_1}\leqslant\underline{x_4}$\\
 & & $\underline{pc}=\underline{x_0}\oplus\underline{x_1}$ & $\underline{x_4}=\underline{x_0}\oplus\underline{x_1}\oplus\underline{x_4}$ & $\underline{pc}=\underline{x_0}\oplus\underline{x_1}$; $\underline{x_4}=\underline{x_0}\oplus\underline{x_1}\oplus\underline{x_4}$\\
\cline{2-5}
3 \hspace{3mm} $x_1 = x_2$ & $\underline{x_0}\oplus\underline{x_2}\leqslant\underline{x_1}$ & $\underline{x_0}\oplus\underline{x_1}\oplus\underline{x_2}\leqslant\underline{x_1}$ & $\underline{x_0}\oplus\underline{x_2}\leqslant\underline{x_1}$ & $\underline{x_0}\oplus\underline{x_1}\oplus\underline{x_2}\leqslant\underline{x_1}$\\
 & & $\underline{pc}=\underline{x_0}\oplus\underline{x_1}\oplus\underline{x_2}$ & $\underline{x_1}=\underline{x_0}\oplus\underline{x_1}\oplus\underline{x_2}$ & $\underline{pc}=\underline{x_0}\oplus\underline{x_1}\oplus\underline{x_2}$; $\underline{x_1}=\underline{x_0}\oplus\underline{x_1}\oplus\underline{x_2}$\\
\cline{2-5}
4 \hspace{3mm} $x_2 = x_3$ & $\underline{x_0}\oplus\underline{x_3}\leqslant\underline{x_2}$ & $\underline{x_0}\oplus\underline{x_1}\oplus\underline{x_2}\oplus\underline{x_3}\leqslant \underline{x_2}$ & $\underline{x_0}\oplus\underline{x_3}\leqslant\underline{x_2}$ & $\underline{x_0}\oplus\underline{x_1}\oplus\underline{x_2}\oplus\underline{x_3}\leqslant \underline{x_2}$\\
 & & $\underline{pc}=\underline{x_0}\oplus\underline{x_1}\oplus\underline{x_2}\oplus\underline{x_3}$ & $\underline{x_2}=\underline{x_0}\oplus\underline{x_2}\oplus\underline{x_3}$ & $\underline{pc}=\underline{x_0}\oplus\underline{x_1}\oplus\underline{x_2}\oplus\underline{x_3}$; $\underline{x_2}=\underline{x_0}\oplus\underline{x_1}\oplus\underline{x_2}\oplus\underline{x_3}$\\
\cline{2-5}
5 $x_1 = x_5$ & $\underline{x_5}\leqslant\underline{x_1}$ & $\underline{x_0}\oplus\underline{x_1}\oplus\underline{x_2}\oplus\underline{x_3}\oplus\underline{x_5}\leqslant\underline{x_1}$ & $\underline{x_5}\leqslant\underline{x_0}\oplus\underline{x_1}\oplus\underline{x_2}$ & $\underline{x_0}\oplus\underline{x_1}\oplus\underline{x_2}\oplus\underline{x_3}\oplus\underline{x_5}\leqslant\underline{x_0}\oplus\underline{x_1}\oplus\underline{x_2}$\\
 & & $\underline{pc}=\underline{x_0}\oplus\underline{x_1}\oplus\underline{x_2}\oplus\underline{x_3}\oplus\underline{x_5}$ & $\underline{x_1}=\underline{x_0}\oplus\underline{x_1}\oplus\underline{x_2}\oplus\underline{x_5}$ & $\underline{pc}=\underline{x_0}\oplus\underline{x_1}\oplus\underline{x_2}\oplus\underline{x_3}\oplus\underline{x_5}$;\\
 & & & & $\underline{x_1}=\underline{x_0}\oplus\underline{x_1}\oplus\underline{x_2}\oplus\underline{x_3}\oplus\underline{x_5}$\\
\hline
\end{tabular}
\end{table*}

\noindent A labelling mechanism that follows $P_1$ or $P_2$ evaluates immutable static labels for the objects prior to the computation. Whereas, a label of an object, as per $P_3$ or $P_4$, evolves to accommodate flows as the mechanism encounters during the computation, thus achieves \textit{flow-sensitivity} \cite{hunt2006flow}. Further, the classes  capture implicit flows by tracking progress of program counter ($pc$) whose label is either reset after every program statement ($P_1$, $P_3$) or updated monotonically ($P_2$, $P_4$). However, static labelling, i.e., $P_1, P_2$, is often too restrictive to recognize a benign program. Dynamic labelling, i.e., $P_3, P_4$, on the other hand, is also inappropriate since they may acknowledge all programs as secure. The experience motivates us to follow a hybrid labelling that would have a nice trade-off for static and dynamic labels. Moreover, both the labelling approaches lack in precision while computing labels for certain scenarios, for example, not reflecting the potential influences by forward and recurring backward information flows for an iterative statement. This often leads to information leaks due to the non-termination of programs or violation of non-interference property \cite{goguen1982security,volpano_smith}. We compare subtle precision in flow analysis captured by these labelling classes using a classic while-program structure in Section \ref{sec:constraints}.

In this work, we present a flow analysis that adopts a compile-time hybrid labelling with $pc$-monotonic that would have an excellent trade-off for mutable and immutable labels to realize acceptable precision. A set of system objects called \textit{global} that are sensitive to the outside world, given immutable labels. Whereas, the intermediate objects, called \textit{local}, have mutable labels that change dynamically during computation. Further, our labelling mechanism unrolls a loop statement for a finite time to capture the influences of backward information flow. Also, by increasing the $pc$ label monotonically, we capture the inter-statement forward information flow. Thus the mechanism paves the way for realizing termination-and progress-sensitive information leaks in sequential programs. Given the labelling approach, we develop a flow analyzer \textit{Pifthon} for an extended subset of Python language. The analyzer auto-generates final labels for the local objects from a given set of global objects and their immutable labels, provided the program adheres to the IFP at each program point.


Usually, enforcement of the global policy of non-interference ensures information flow only in the upward direction in the lattice. In practice, often system design demands information to flow downward direction too. For this purpose, the notion of  \textit{declassification} or \textit{downgrading} \cite{myers2000protecting,zdancewic2001robust,sabelfeld2003model,hicks2006languages,king2007automatic} has been used extensively under  rigid conditions so that confidentiality is not violated to the detriment of the usage. Downgrading eventually allows more subjects to be readers of the information. However, the addition of readers needs to be genuinely robust as it may reduce to pure discretionary access control that has severe consequences in a decentralized model. For this purpose, we are using the ``downgrading'' rule from \texttt{RWFM} model \cite{kumar2014realizing} that is robust from the specification perspective.

Main contributions of our work are:\\
$\bullet$ We propose a compile-time hybrid (static+dynamic) labelling approach that updates $pc$-label monotonically. Labels of global subjects/objects remain static and labels of local objects start with the least restrictive label but dynamically change to adjust flows during computation.\\
$\bullet$ We design an approach that unrolls iterative statements for finite times to analyze the recurring backward information flows. The approach, while updating $pc$-label monotonically, provides a solution to identify leaks through termination- and progress-sensitive channels.\\
$\bullet$ We develop \textit{Pifthon}-flow analyzer, for a dialect of Python language based on the proposed labelling mechanism. It generates final labels for local objects (including $pc$) for a given set of immutable labels of global objects of the program. It also identifies program points that could leak information.\\ 
$\bullet$ Our system uses a non-discretionary downgrading mechanism that does not require global knowledge of the system by the user.


\textbf{Structure of the paper:}  Characterization of labelling schemas is discussed in the Section \ref{sec:introduction} followed by the syntax and semantics of PyX in section \ref{sec:language}. 
In section \ref{sec:pifthon_characteristics}, we discuss characteristics of  \textit{Pifthon} followed by the labelling semantics of the tool in Section \ref{sec:labelling_semantics}.  Section \ref{sec:downgrading} describes the downgrading approach used. In section \ref{sec:experiences}, we discuss  user experience of \textit{Pifthon}. Section \ref{sec:related_work} provides a relative comparison of the approaches, followed by conclusion in Section \ref{conclusion}.

\section{Labelling Mechanisms}
\label{sec:constraints}
This section derives a relation among labelling mechanisms discussed in Section \ref{sec:introduction}. The relation highlights the contrast in precision measured in terms of the registered flow channels by each mechanism. Table \ref{table:constraints} shows the constraints generated according to the labelling schemes for a simple while-program. Constraints are written in terms of security conditions that should satisfy for an information flow at a program point. For instance, consider line 3, where the information flows in the direction $x_0\rightarrow x_2\rightarrow x_1$. The flow is secure only if it satisfies  $\underline{x_0}\oplus \underline{x_2}\oplus\underline{pc}\leqslant\underline{x_1}$, where $\underline{x_i}$ denotes the security label of $x_i$, $\underline{pc}$ denotes the label of the program counter, and $\oplus$ is a binary class combining operator (\textit{join}) that evaluates least upper bound (LUB) of two security classes. Thus, flow constraints according to schemes, e.g., $P_1$, $P_2$, at line 3 are as follows:\\
$\bullet$ Since $P_1$ resets $\underline{pc}$ to the least class \textit{bottom} or \textit{low} ($\bot$) after executing each statement, the security constraint is straightforward: $\underline{x_0}\oplus \underline{x_2}\leqslant\underline{x_1}$;\\
$\bullet$ As per $P_2$, since $pc$ has accessed the variables $x_0$ and $x_1$ till line 2, the $pc$ label is updated to $\underline{x_0}\oplus\underline{x_1}$. Therefore, the security constraint is evaluated as $\underline{x_0}\oplus \underline{x_1}\oplus\underline{x_2}\leqslant\underline{x_1}$.\\
Similarly, we obtain the constraints at each program point following the semantics of the schemes $P_1$, $P_2$, $P_3$ and $P_4$. 

We draw the following observations from Table \ref{table:constraints}: (i) scheme $P_1$ yields the smallest set of possible flows, misses to attain the inter-statement flows, for instance, information flow from the loop guard to the statement following the loop,  (ii) the schemes $P_2$ and $P_3$ fail to attain the complete set of global flows, and (iii) $P_4$ subsumes all the flows achieved by the other three schemes. We envision a Venn diagram shown in Figure \ref{binding_venn} that depicts relationships among the four labelling schemes. Consider the outermost circle depicts the set of all possible flows that could exist for a given program. Then $P_4$ outperforms the other labelling schemes in terms of precision. Nonetheless, it fails to include information flows that could occur because of multiple iterations of the loop, for example, $\underline{x_0}\oplus\underline{x_1}\oplus\underline{x_2}\oplus\underline{x_3}\leqslant\underline{x_4}$.

\begin{wrapfigure}{Ri}{4cm}
\centering
\includegraphics[width=4cm,height=4cm,keepaspectratio]{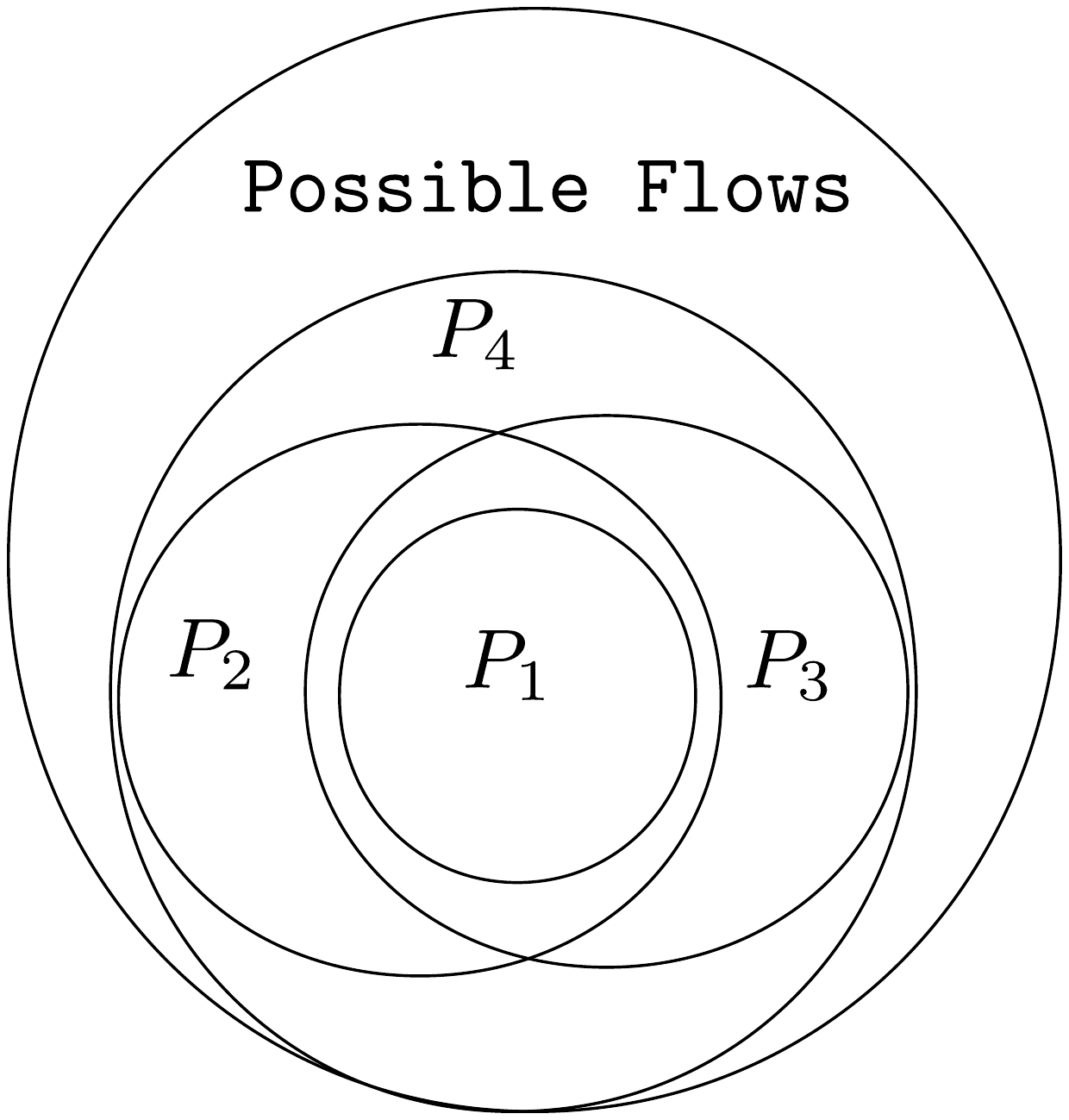}
\caption{Relation among the labelling mechanisms}
\label{binding_venn}
\end{wrapfigure}

We summarize the shortcomings of the discussed labelling schemes in terms of precision: (i) static labelling approaches such as $P_1, P_2$ are inappropriate as they are too restrictive, and might result in secure programs being incorrectly rejected as insecure or vice-versa (ii) dynamic labelling schemes $P_3, P_4$ are also inappropriate as they may end up allowing all programs as secure (iii) flow analysis of loop statements (terminating, non-terminating, abnormally terminating including exceptions) under the schemes that reset the label of the program counter, i.e., $P_1,P_3$, miss forward information flows, therefore fails to realize information leaks due to \textit{termination-}\cite{askarov2008termination},\textit{progress-sensitive}\cite{volpano1997eliminating} channels (e.g., \textit{copy5}, \textit{copy6}\cite{robling1982cryptography}), and (iv) schemes that monotonically increase the label of program counter, i.e., $P_2, P_4$, successfully capture the forward information flow, but miss the impact on object labels caused by recurring backward information flow.   

We introduce a semi-dynamic labelling (a subset of objects referred as \textit{global} have static labels) of objects in the program, that could be unrolled a finite number of times for iterative statements to capture both the forward information flow as well as the impact on the security labels due to backward label propagation.

In the remaining part of this section, we provide a vivid description of the notion of a security label using the Readers-Writers Flow Model (RWFM) that is succinct in specifying confidentiality and integrity policies associated with an information.

\subsection{\texttt{RWFM} Labels}
\label{subsec:labels}
We borrow a lattice-based information flow control model, i.e., Readers-Writers Flow Model (RWFM) \cite{kumar2017complete} for labelling subjects and objects and governing information flows in a program. In \texttt{RWFM}, a  subject or principal is a string representation of a source of authority such as user, process, also called as \textit{active} agent of a program responsible for information flow. On the other hand, objects are passive agents such as variables, files used for storing information. 

A \texttt{RWFM} label $l$ of a subject or object is a three-tuple $(s,R,W)$ ($s,R,W\in\text{ set of principals } P$), where $s$ represent the owner of the information and policy, $R$ denote the set of subjects (\textit{readers}) allowed to read the information, and $W$ refer to the set of subjects (\textit{writers}) who have influenced the information so far. The readers and writers set, respectively, specify the confidentiality and integrity policy associated with the information. Information from a source with a \texttt{RWFM} label $L_1$ can flow to an endpoint having \texttt{RWFM} label $L_2$ ($L_1\leqslant L_2$) if it does not violate the confidentiality and integrity policies already imposed by $L_1$. This \textit{can-flow-to} relationship is defined as below:

\begin{defn}[\textbf{Can-flow-to relation} ($\mathbf{\leqslant}$)]
Given any two \texttt{RWFM} labels $L_1=(s_1,R_1,W_1)$ and $L_2=(s_2,R_2,W_2)$, the \textit{can-flow-to} relation is defined as:
\[
\dfrac{R_1\supseteq R_2\quad W_1\subseteq W_2}{L_1\leqslant L_2}
\]
\end{defn}

The join ($\oplus$) and meet ($\otimes$) of any two \texttt{RWFM} labels $L_1=(s_1,R_1,W_1)$ and $L_2=(s_2,R_2,W_2)$ are respectively defined as
\[L_1\oplus L_2 = (-,R_1\cap R_2,W_1\cup W_2),\quad L_1\otimes L_2 = (-,R_1\cup R_2,W_1\cap W_2)\]

\noindent Then the set of \texttt{RWFM} labels $SC=P\times2^P\times 2^P$ forms a bounded lattice $(SC,\leqslant,\oplus,\otimes,\top,\bot)$, where $(SC,\leqslant)$ is a partially ordered set and $\top=(-,\emptyset,P)$, and $\bot=(-,P,\emptyset)$ are respectively the maximum and minimum elements.

\noindent{\textbf{Example of \texttt{RWFM} Labels and Operation:}}\\
Let a labelling function $\lambda$ maps an object $x$ to the label $(A,\{A\},\{A, B\})$ ($A,B\in P$), interpret as $A$ is the owner of the information in $x$; $A$ allows only herself to read the information; and $A, B$ have contributed to the information so far. Similarly, consider another object $y$ is mapped to the label $(B,\{A,B\},\{B\})$. Then, as per the \texttt{RWFM}, information flow from $y$ to $x$ is allowed ($\lambda(y)\leqslant\lambda(x)$) since $R_{\lambda(y)}\supseteq R_{\lambda(x)}$ and $W_{\lambda(y)}\subseteq W_{\lambda(x)}$. Further, a joining operation of the labels $\lambda(x)$ and $\lambda(y)$ shall be performed as follows:
\begin{align*}
R_{\lambda(x)}\cap R_{\lambda(y)} = \{A\}\quad W_{\lambda(x)}\cup W_{\lambda(y)}=\{A,B\}\\
\lambda(x)\oplus\lambda(y) = (-,\{A\},\{A,B\})
\end{align*}

\section{$\mathbf{PyX}$ Language: Syntax and Semantics}
\label{sec:language}
Here, we describe the syntax and semantics of PyX, and the labelling adopted in our analyzer.

A subset of Python language, extended with a command for explicit downgrading, constitute our core programming language called PyX. The language in general keeps to the syntax, alignment, and indentation followed in Python framework.

\subsection{Syntax}
\label{subsec:syntax}
Abstract syntax of PyX is shown in Figure \ref{pyx_syntax}. Two basic syntactic categories of the language are: expressions and statements. Let $c$ denote a statement, $n$, $x$, $p$ range over the set of literals, program variables and principals respectively. Expressions, $e$, are built up by variables, literals, boolean variables, binary and unary arithmetic operators, and function references.
\begin{figure}[htb]
\begin{tabbing}
$e$ ::=\=\hspace{0.5cm}$n$ $|$ $x$ $|$ $e$ op $e$  $|$ op $e$  $|$ $fcall$\\
$c$ ::=\=\hspace{0.5cm}{\bf \texttt{pass}} $|$ $x=e$ $|$ $c$ $[c..]$ $|$ {\bf \texttt{if}} $e$: $c_1$ {\bf \texttt{else:}} $c_2$ $|$ {\bf \texttt{while}} $e$: $c$ $|$ $fdef$\\
\>\hspace{0.5cm}$|$ $fcall$ $|$ {\bf \texttt{return}} $x$ $|$ {\bf \texttt{return}} $ downgrade$\\
$fdef$ ::=\=\hspace{0.5cm}{\bf \texttt{def} f}($\epsilon|x[,x\dots]$): $c$ \\
$fcall$ ::=\=\hspace{0.5cm}\textbf{f}($\epsilon|x[,x\dots]$) \\
$downgrade$ ::=\=\hspace{0.5cm}{\bf \texttt{downgrade}}($x$, $\{p,\dots\}$)\\
\end{tabbing}
\caption{Syntax of PyX}
\label{pyx_syntax}
\end{figure}

\noindent 
For simplicity, we consider non-recursive functions that follow  \textit{call-by-value} approach for parameter passing, where the function parameters are variables only. Further, each function contains no more than one return statement that occurs at function exit. The construct \textit{downgrade} is used to reduce the sensitivity of a variable,  and only occurs at the time of function return. It takes a variable, whose sensitivity is the subject of interest, and a set of principals as input. While downgrading sensitivity, it performs an inclusion of the given set of principals into the existing \textit{reader} set of the variable and returns a new label. The notion of readers will become  clear in the sequel. Without ambiguity, we identify string representation of a principal or subject by $A$ or `$A$' instead of ``$A$''.

\subsection{Operational Semantics}
\label{subsec:pyx_semantics}
Small-step operational semantics for PyX is given in Figure \ref{fig:pyx_semantics}. We capture changes in the memory state using finite domain functions $\phi$ and $\sigma$ ranging over the set of variable environment $\Phi$ and storage environment $\Sigma$ respectively. The mapping functions together establish a state of the transition system and defined as $\phi\in\Phi=\mathcal{X}\rightarrow\mathcal{M}$ and $\sigma\in\Sigma=\mathcal{M}\rightarrow\mathcal{V}$, where $\mathcal{X}$ denotes a set of variables or storage cells, $\mathcal{M}$ denotes address space, and $\mathcal{V}$ denotes set of all possible values. The set of configurations of the transition system is $(\Phi\times\Sigma\times S)\cup(\Phi\times\Sigma)$, where $S$ denotes the set of program statements. Small-step transition relation for expressions is denoted by $\rightarrow_e\subseteq(\Phi\times\Sigma\times e)\times\mathcal{V}$ and for statements is denoted by $\rightarrow_c\subseteq(\Phi\times\Sigma\times S)\times ((\Phi\times\Sigma\times S)\cup(\Phi\times\Sigma))$.


\begin{figure}[htb]
\begin{center}
\fontsize{7}{6}\selectfont
\begin{tabular}{cc}
\hline
 & \\
 (SKIP) & $\dfrac{}{\langle\phi,\sigma,\texttt{pass}\rangle\rightarrow_c\langle\phi,\sigma,\epsilon\rangle}$ \\
 & \\ 
 (ASSIGN) & $\dfrac{\langle\phi,\sigma,e\rangle\rightarrow_e \langle\phi,\sigma,e'\rangle}{\langle\phi,\sigma,x=e\rangle\rightarrow_c\langle\phi,\sigma,x=e'\rangle}$\\
 & \\
 & $\dfrac{\langle\phi,\sigma,e'\rangle\rightarrow_e v\hspace{2mm}l=alloc(\mathcal{M})\hspace{2mm}l\notin codom(\phi)\cup dom(\sigma)}{\langle\phi,\sigma,x=e'\rangle\rightarrow_c\langle\phi[x\mapsto l],\sigma[l\mapsto v]\rangle}$\\
 & \\
 & $\dfrac{\langle\phi,\sigma,e'\rangle\rightarrow_e v\hspace{2mm}\phi(x)=l\hspace{2mm}l\in codom(\phi)\cup dom(\sigma)}{\langle\phi,\sigma,x=e'\rangle\rightarrow_c\langle\phi,\sigma[\phi(x)\mapsto v]\rangle}$\\
 & \\
 (COMPOSE) & $\dfrac{\langle\phi,\sigma,c_1\rangle\rightarrow_c\langle\phi_1,\sigma_1,c_1'\rangle}{\langle\phi,\sigma,c_1 c_2\rangle\rightarrow_c\langle\phi_1,\sigma_1,c_1' c_2\rangle}$\\
 & \\
 & $\dfrac{\langle\phi,\sigma,c_1\rangle\rightarrow_c\langle\phi_1,\sigma_1,\epsilon\rangle}{\langle\phi,\sigma,c_1 c_2\rangle\rightarrow_c\langle\phi_1,\sigma_1,c_2\rangle}$\\
 & \\
 (IF) & $\dfrac{\langle\phi,\sigma,e\rangle\rightarrow_e\langle\phi,\sigma,e'\rangle}{\langle\phi,\sigma,\texttt{if } e:c_1\texttt{ else: }c_2\rangle\rightarrow_c\langle\phi,\sigma,\texttt{if } e':c_1\texttt{ else: }c_2\rangle}$\\
  & \\
  & $\dfrac{\langle\phi,\sigma,e'\rangle\rightarrow_e True}{\langle\phi,\sigma,\texttt{if } True:c_1\texttt{ else: }c_2\rangle\rightarrow_c\langle\phi,\sigma,c_1\rangle}$\\
    & \\
  & $\dfrac{\langle\phi,\sigma,e'\rangle\rightarrow_e False}{\langle\phi,\sigma,\texttt{if } False:c_1\texttt{ else: }c_2\rangle\rightarrow_c\langle\phi,\sigma,c_2\rangle}$\\
  & \\
  (WHILE) & $\dfrac{\langle\phi,\sigma,\texttt{if }e:c \texttt{ while }e:c\texttt{ else: }pass\rangle\rightarrow_c\langle\phi',\sigma',c'\rangle}{\langle\phi,\sigma,\texttt{while }e:c\rangle\rightarrow_c\langle\phi',\sigma',c'\rangle}$\\
  & \\
  (FDEF) & $\dfrac{}{\langle\phi,\sigma,\textbf{def } f(x):c\rangle\rightarrow_{fdef}\langle\phi[f\mapsto(x,c,\phi)],\sigma\rangle}$\\
  & \\
  (FCALL) & $\dfrac{\let\scriptstyle\textstyle\substack{\langle\phi,\sigma,x\rangle\rightarrow_e v\hspace{2mm}\phi_2=\phi_1[f\mapsto\phi(f)][x\mapsto l]\\\phi(f)=(x,c,\phi_1)\hspace{2mm}l\notin codom(\phi)\cup dom(\sigma)\\\langle\phi_2,\sigma[l\mapsto v],c\rangle\rightarrow_c \sigma'}}{\langle\phi,\sigma,f(x)\rangle\rightarrow_c \sigma'}$\\
  & \\
\hline
\end{tabular}
\end{center}
\caption{Small-step operational semantics for PyX}
\label{fig:pyx_semantics}
\end{figure}

Let $f$ be a non-recursive function that accepts none or a single variable as a parameter.  State transformation caused by a function ranging over environments $\Theta_0$ and $\Theta_1$ is given by  $\Theta_0 = \Sigma\rightarrow\Sigma$ for a parameter-less function and $\Theta_1=\Sigma\times\mathcal{V}\rightarrow\Sigma$ for a function with a single parameter. Then, we extend the definition of storage environment as follows: $\Phi = \mathcal{X}\rightarrow(\mathcal{M}+\Theta)$, where $\Theta = \Theta_0+\Theta_1$. Note that, we can easily extend the definition of $\Theta$ for multiple parameters. A function declaration modifies the environment by associating with the function identifier, declaration environment, and the body of the function. We represent this new environment by \textit{closure}, that is a subset of $(\mathcal{X}\times S\times \Phi)$. Thus, a function declaration is merely a binding for $f$ to the closure. Small-step transition relation for function declaration is denoted by $\rightarrow_{fdef}\subseteq(\Phi\times\Sigma\times fdef)\times(\Phi\times\Sigma)$.

\begin{table*}[htb]
\caption{Static labelling vs. Dynamic labelling}
\label{table:dynamic_labelling}
\begin{tabular}{p{1cm}|p{6cm}|p{6cm}}
\hline
\multicolumn{1}{c|}{\textbf{Program}} & \multicolumn{1}{c|}{\textbf{Static labelling}} & \multicolumn{1}{c}{\textbf{Dynamic labelling}} \\
\hline
$a=x$ & \multirow{6}{6cm}{Static labelling attempts to infer the label of $a$ beforehand in such a way that all the flows to and from $a$ are secure. If $a$ is labelled as $\underline{x}\oplus\underline{z}$ due to explicit flows from $x$ and $z$ to $a$, the constraint $\underline{x}\oplus\underline{z}\leqslant\underline{y}$ does not satisfy as the condition given is $\underline{z}\nleqslant\underline{y}$. As static labelling fails to label the local variable $a$, the program is insecure.} & In dynamic labelling local variable $a$ is initially labelled as public ($\bot$) but lifted to $\underline{x}$ so that $x$ can flow to $a$.\\
$y=a$ &  & The flow from $a$ to $y$ is allowed since the constraint $\underline{x}\leqslant\underline{y}$ is satisfied.\\
$a=z$ &  & Label of $a$ is updated to  $\underline{x}\oplus\underline{z}$ so that flow is allowed as $\underline{z}\leqslant\underline{x}\oplus\underline{z}$. Hence the program is flow-safe.\\
\hline
\end{tabular}
\end{table*}

\subsection{Labelling PyX Programs}
\label{subsec:labelling_program}
Labelling procedure starts by identifying  global and local objects in a given PyX program. Essentially, objects sensitive to the outside world are considered global, but the programmer may use her discretion to choose any variable as a global object. A set of static \texttt{RWFM} labels of the global objects goes as input for the analyzer along with the PyX program. In addition, a programmer is required to specify the highest security class a subject or process executing the program can access -- often referred to as subject's \textit{clearance} label.

Of course, if global labels were annotated in the program itself, as in existing tools like \cite{myers2001jif,broberg2013paragon,buiras2015hlio}, it would have been easy. But, annotating a source code with security labels is often found problematic due to several reasons \cite{johnson2015exploring}, e.g., changing the security policy may require modifying several annotations, manual effort in annotating large programs \cite{hammer2010experiences}, etc. We ameliorate the situation by maintaining a single JSON file that contains immutable labels of all the global variables. PyX program, together with its' JSON file, are provided as input to \textit{Pifthon}. The tool automatically labels local variables with initial mutable labels.

Given a PyX program, global objects, their immutable \texttt{RWFM} labels and a clearance, \textit{Pifthon} auto-generates final labels for the local variables.

\section{Design Of the Analyzer}
\label{sec:pifthon_characteristics}
In this section, we provide an overview of the design of our flow analyzer $\textit{Pifthon}$ whose primary features are:\\
(i) It adopts a hybrid labelling mechanism that combines mutable and immutable security labelling schemes;\\
(ii) It manages a single program counter ($pc$) label, which increases monotonically evaluating the least upper bound of all the accessed objects;\\
(iii) It iterates loop statements until the labels of target objects in the loop body saturate -- capturing  backward information flows that arise due to multiple iterations.\\
Each of the above tasks is illustrated below along with the differences with other existing IFC approaches.

While the majority of the literature follows static (immutable) labelling, e.g., Jif \cite{myers2001jif}, Paragon \cite{broberg2013paragon}, FlowCaml \cite{simonet2003flow}, example in Table \ref{table:dynamic_labelling} shows the importance of dynamic (mutable) labelling since the former approach is too restrictive, and rejects secure programs as insecure. Consider the program fragment where the security labels of variables $x, y$ and $z$ are denoted by $\underline{x}$, $\underline{y}$ and $\underline{z}$ respectively such that $\underline{z}\nleqslant\underline{y}$, and $a$ is a local variable. If the program is analyzed under static labelling, it would fix the label of $a$ as $\underline{x}\oplus\underline{z}$ due to explicit flows from $x$ and $z$ to $a$. This fails to satisfy the constraint $\underline{a}\leqslant\underline{y}$ in line 2 due to the given condition $\underline{z}\nleqslant\underline{y}$. However, the program never causes an information flow from $z$ to $y$, and must be considered as secure if $\underline{x}\leqslant\underline{y}$.

The above example shows that following a static labelling scheme could miss several secure programs. However, a purely dynamic labelling is also inappropriate as it could end up declaring all programs as flow secure. Hence, an ideal label-checking mechanism should have a hybrid labelling. \textit{Pifthon} follows a hybrid labelling approach where the local objects of a program are labelled dynamically, whereas the global objects have static labels. Thus, the analyzer addresses a nice trade-off for static and dynamic labels to realize acceptable precision and performance.

Often an ``implicit'' flow is found to occur while executing branch or iterative statement even if the statement does not get executed. The literature of language-based security introduces the notion of program counter ($pc$) label to capture the impact of such implicit flows. Traditionally, once the control exits the conditional or iteration statements, the $pc$ label is reset to its previous value, thus denoting that the variables in the condition expression no longer impact the current context. However, the approach is flow-insensitive and does not capture forward information flow while exiting the control statement. Table \ref{table:pc_monotonic} shows the differences in the evaluation of $pc$ label and its' impact in the analysis by the tools following such an approach $pc$-reset \cite{myers2001jif,simonet2003flow,zheng2007dynamic,broberg2013paragon} and $pc$-monotonic, i.e., \textit{Pifthon}.

\begin{table}[htb]
\centering
\caption{$pc$-reset ($P_1,P_3$) vs. $pc$-monotonic ($P_2,P_4$)}
\label{table:pc_monotonic}
\begin{tabular}{lccc}
  \hline
  \textbf{Program} & \textbf{Constraints} & \multicolumn{2}{c}{Label of $pc$ ($\mathbf{\lambda(pc)}$)}\\
  \cline{3-4}
   & & $pc$-reset & $pc$-monotonic \\
  \hline
  1\hspace{3mm}y=0 & $\lambda(pc)\oplus\bot\leqslant\lambda(y)$ & $\bot$ & $\bot$ \\
  2\hspace{3mm}whie $x==0$:  & $\lambda(pc)=\lambda(x)\oplus\lambda(pc)$ & $\underline{x}$ & $\underline{x}$\\
  3\hspace{6mm}pass & & $\underline{x}$ & $\underline{x}$\\
  4\hspace{3mm}y=1 & $\lambda(pc)\oplus\bot\leqslant\lambda(y)$ & $\bot$ & $\underline{x}$\\
  \hline
\end{tabular}
\end{table}
\noindent Consider labels of the variables $x$ and $y$ that are given by $\underline{x}$ and $\underline{y}$ respectively, such that $\underline{x}\nleqslant\underline{y}$. Note that, the program initializes the value of $y$ as 0 and the following loop diverges if $x$ is also 0. If $x$ is 1, the program terminates with the value of $y$ set to 1. Therefore, the program is insecure since there is an implicit information flow from $x$ to $y$ depending on the termination of
the program. However, existing tools that  reset  $pc$ label, do not realize the forward information flow while exiting from the loop statement. On the contrary, \textit{Pifthon} detects the insecurity by following the approach $pc$-monotonic as described in the later section.

Next, we demonstrate the efficacy of \textit{Pifthon} in capturing the backward implicit flow in a loop statement through the example shown in Table \ref{table:backward_flow}. In the example, the static labels of $x$ and $y$ are given as $\underline{x}$, $\underline{y}$ respectively such that $\underline{x}\nleqslant\underline{y}$. $z$ is the only local variable with the given initial mutable label $\bot$. Note that information might flow from $x$ to $y$ via $z$ due to  backward flow caused by the while loop. This would cause violation of the baseline property of non-interference \cite{goguen1982security,volpano_smith}. Existing literature that enforces non-interference in programming languages have sidestepped addressing backward information flow. In our approach, the label of $pc$ plays a crucial role in capturing both the forward and backward implicit flows. Let the initial label of $pc$ be \textit{low}, i.e., $\bot$ but monotonically increases as it reads objects throughout the computations. While executing statement $z=x$, the label of $pc$ becomes LUB of $pc$ label and the label of $x$, i.e., $\bot\oplus\underline{x}=\underline{x}$. Note that, the computation requires one more unrolling of the loop to discover the violation of non-interference at the statement $y=z$ since $\underline{z}\oplus\underline{pc}\nleqslant\underline{y}$. Ideally, in the absence of any potential information leak, the computation would continue unrolling an iterative statement until there are no more changes in the label of local variables and $pc$ due to the iteration. 

\begin{table}[htb]
\centering
\caption{Loop: Recurring backward implicit flow from $x$ to $y$ }
\label{table:backward_flow}
\begin{tabular}{lccc}
  \hline
  \textbf{Statement} & & \textbf{$1^{st}$ Iteration} & \textbf{$2^{nd}$ Iteration}\\
   & & $\lambda(pc)$ &  $\lambda(pc)$\\
  \hline
  1\hspace{3mm}\texttt{while} $True$: & & $\bot$ & $\underline{z}\oplus\underline{x}$\\
  2\hspace{6mm}\texttt{y = z} & & $\underline{z}$ & $\underline{z}\oplus\underline{x}$ \\ 
  3\hspace{6mm}\texttt{z = x} & & $\underline{z}\oplus\underline{x}$ & \\ 
  \hline
\end{tabular}
\end{table}

The above characteristics of \textit{Pifthon} improve the precision in realizing a maximum subset of flow channels while making the compile-time analysis flow-sensitive and termination-sensitive.
\section{Security Semantics of $\mathbf{Pifthon}$}
\label{sec:labelling_semantics}
In this section, we describe the labelling semantics of \textit{Pifthon}. The specification of a PyX program available to the flow analyzer comprises: (i) $P$ - set of stakeholders for the computation, (ii) $V$ - set of program variables, (iii) $G$ - set of global variables ($G\subseteq V$) labelled with the given immutable \textit{RWFM} labels, (iv) $L\subseteq V$ - set of local variables such that $L\cap G=\{\emptyset\}$, (v) $p\in P$ - the subject or the principal with the authority required for executing the program that is being analyzed, (vi) $cl$ - the highest label the subject executing a program can access, and (vii) a program $c$.

The above program specification is given to \textit{Pifthon} with the objective of identifying program point that could lead to information misuse. A program is said to \textbf{MISUSE} information if there exists a command at a program point\\
(i) where the security label of the information source is higher than $cl$, or\\
(ii) which, when executed could cause an information flow that violates the underlying \textit{can-flow-to} relation between the labels of the source and the endpoint.\\

\noindent{\textbf{Notation:}}
Let $S$ denote the set of all commands and the initial program execution environment is described by the following  functions written as $f: S\rightarrow 2^V$ -- returns a set of variables for a command of PyX.\\
$\bullet$ Var: returns the set of all variables associated with a command,\\
$\bullet$ SV: returns the set of all variables that are the sources of information flows (implicit/explicit) for a command,\\
$\bullet$ TV: returns the set of variables that are the targets of information flows for a given command.\\
Further, functions $A$, $R$, and $W$ are the projections of a \texttt{RWFM} label to its first (owner), second (readers), and third (writers) component, respectively.  Note that the owner part of a security label is only for downgrading operations as  described in the sequel. An evaluation state is defined by the tuple $\lbrack \sigma,\lambda\rbrack$, where $\sigma$ range over the storage environment $\Sigma$ and $\lambda$ is the projection from variable identifiers including $pc$ to the respective security labels of \texttt{RWFM} lattice $\mathcal{L}$.

\subsection{Semantics}
Figure \ref{semantics_initial} succinctly defines  the initial environment for computation. The programmer provides the initial immutable labels $\lambda(x)$ for the global variables, i.e.,  $\forall x\in G$ and a clearance  label $cl$. Variable $pc$ (\textit{program counter}) represents the current stage of the control flow. Local variables and $pc$ are initially labelled with mutable label $(p,\{*\},\{\})$, where $p$ is the executing subject. Program $\langle P,V,G,p:c \rangle$ is analyzed  in the  environment $\Gamma = (P,V,G,p)$ and execution state $\lbrack \sigma,\lambda\rbrack$ only if the subject $p$ is a permissible reader of all the global sources in the program.

\begin{figure*}[htb]
\begin{center}
\fontsize{7}{6}\selectfont
\begin{tabular}{c}
\hline
 \\
$p\in P\hspace{0.5cm}G\subseteq V= Var(c)\hspace{0.5cm}\forall x\in SV(c)\cap G\lbrack p\in R(\lambda(x))\rbrack$\\
 \\
$\dfrac{\forall x\in SV(c)\cap G\lbrack \lambda'(x)=\lambda(x)\rbrack\hspace{0.3cm}\forall x\in (SV(c)\cap L\cup \{pc\})\lbrack \lambda'(x)=(p,\{*\},\{\})\rbrack}{\langle P,V,G,p :c\rangle \rightarrow (P,V,G,p)\vdash\langle c,\sigma',\lambda'\rangle}$\\
\\
$\dfrac{(p\notin P)\vee(G\nsubseteq V)\vee(V\neq Var(c))\vee(\exists x\in SV(c)\cap G\lbrack p\notin R(\lambda(x)) \rbrack\vee\exists x\in SV(c)\lbrack \lambda(x)\nleqslant cl\rbrack)}{\langle P,V,G,p :c \rangle \rightarrow \langle \text{MISUSE}\rangle}$\\
\\
\hline
\end{tabular}
\caption{Semantics for initial execution environment}
\label{semantics_initial}
\end{center}
\end{figure*}

Below, we provide a semantics that formally describes generation of security labels for local variables,  $pc$, as well as checking of security constraints for global variables in each PyX command. Derivations are of the form \[\Gamma\vdash\langle c,\sigma,\lambda\rangle\downarrow v^l\] where command $c$ with initial memory mapping $\sigma$ and labelling $\lambda$, evaluates to $v^l$ in the execution environment $\Gamma$. $v^l$ denotes a value $v$ labelled with $l$, where values range over by $v$ --  integer or boolean or string. Labelling semantics of \textit{Pifthon} is nothing but small-step transition relation denoted by $\rightarrow\subseteq(S\times\Sigma\times\mathcal{L})\times((S\times\Sigma\times\mathcal{L})\cup\text{MISUSE})$. 

Program literals are immutably labelled with the least element of \texttt{RWFM} lattice, i.e., $(-,\{*\},\{\})$, where $*$ denote all the stakeholders for a given program. Computation of \texttt{pass} does not require any memory access; thus, it has no effect on the evaluation state.

For an arithmetic operation, $op$, in $(e_1 \texttt{ op } e_2)$, label of the outcome is evaluated by $l_1\oplus l_2$ where $e_1$ and $e_2$ separately yield labels $l_1$ and $l_2$ respectively.

\begin{figure}[htb]
\begin{center}
\fontsize{7}{6}\selectfont
\begin{tabular}{cl}
\hline
  & \\
(INT) & $\Gamma\vdash\langle n, \sigma,\lambda\rangle\downarrow\llbracket n\rrbracket^{(-,\{*\},\{ \})}$ \\
  & \\
 (BOOL) & $\Gamma\vdash \langle True,\sigma,\lambda\rangle \downarrow True^{(-,\{*\},\{ \})}$\hspace{0.3cm}$\Gamma\vdash \langle False,\sigma,\lambda\rangle \downarrow False^{(-,*,\{ \})}$\\
  & \\
 (VAR) & $\Gamma\vdash\langle x, \sigma,\lambda\rangle\downarrow \sigma(\phi(x))^{\lambda(x)}$ \\
  & \\
 (SKIP) & $\Gamma\vdash\langle pass,\sigma,\lambda\rangle\rightarrow \langle\epsilon,\sigma,\lambda\rangle$ \\
  & \\ 
 (ARITH) & $\dfrac{\let\scriptstyle\textstyle\substack{\Gamma\vdash\langle e_1,\sigma,\lambda\rangle\downarrow v_1^{l_1}\hspace{0.3cm}\Gamma\vdash \langle e_2, \sigma,\lambda\rangle\downarrow v_2^{l_2}\hspace{0.3cm}v=(v_1 \texttt{ op } v_2 )\\l=l_1\oplus l_2}}{\Gamma\vdash\langle e_1\texttt{ op }e_2, \sigma,\lambda\rangle\downarrow v^l}$\\
  & \\
\hline
\end{tabular}
\end{center}
\caption{Semantics for expressions \& arithmetic operations}
\label{semantics_expr}
\end{figure}

Execution of  $x=e$ is interpreted as follows: first, evaluate expression $e$ on the RHS by reading the values in the variables appearing in $e$; second, write the result in the variable on the LHS. Let evaluation of $e$ result with value $v$ and label $l$. Label $l$ is equal to LUB of all the labels of the variables that occur in $e$. Naturally, accessing all the variables requires the subject clearance higher than $l$, which should satisfy $l\leqslant cl$. Since we increase the $pc$ label while accessing the variables, the new label of $pc$ becomes $l_1=l\oplus\lambda(pc)$. Although modern compilers could optimize accessing variables, e.g., while evaluating $0*y$, we enforce the analyzer to record the labels of all the variables to uphold best security practices. For the second step, the label of the left-hand side target variable $x$ should be no less restrictive than $l_1$. If the target is a local variable, the label of $x$ is updated by $l_1\oplus\lambda(x)$. On the other hand, if $x$ is a global variable ($x\in G$), the analyzer does not update the target label but only checks if $l_1\leqslant\lambda(x)$ to satisfy IFP, which if violated indicates a misuse. Thus, the analyzer follows static labelling when the target is a global variable; whereas, it uses dynamic labelling by updating labels when the target is a local variable. 

The above discussion provides the essence of hybrid labelling approach. In fact, it also makes the analyzer flow-sensitive. Semantics is illustrated through the example in Table \ref{tbl:assignment} (cf. Appendix \ref{sec:pifthon_output} to see the output from \textit{Pifthon}). 
\begin{figure}[htb]
\begin{center}
\fontsize{7}{6}\selectfont
\begin{tabular}{cl}
\hline
  & \\
 (ASSIGN) & $\dfrac{\let\scriptstyle\textstyle\substack{\Gamma\vdash\langle e,\sigma,\lambda\rangle\downarrow v^l\hspace{0.3cm}x\in L\hspace{0.3cm}\sigma'=\sigma\lbrack \phi(x)\mapsto v\rbrack\\l_1=l\oplus\lambda(pc)\hspace{0.3cm}\lambda'=\lambda\lbrack x\mapsto l_1,pc\mapsto l_1\rbrack}}{\Gamma\vdash\langle x=e,\sigma,\lambda\rangle\rightarrow\langle\epsilon,\sigma',\lambda'\rangle}$\\
  & \\
  & $\dfrac{\let\scriptstyle\textstyle\substack{\Gamma\vdash\langle e,\sigma,\lambda\rangle\downarrow v^l\hspace{0.3cm}x\in G\hspace{0.3cm}l_1=l\oplus\lambda(pc)\hspace{0.3cm}l_1\leqslant\lambda(x)\\\sigma'=\sigma\lbrack \phi(x)\mapsto v\rbrack\hspace{0.3cm}\lambda'=\lambda\lbrack pc\mapsto l_1\rbrack}}{\Gamma\vdash\langle x=e,\sigma,\lambda\rangle\rightarrow\langle\epsilon,\sigma',\lambda'\rangle}$\\
  & \\
  & $\dfrac{\let\scriptstyle\textstyle\substack{\Gamma\vdash\langle e,\sigma,\lambda\rangle\downarrow v^l\hspace{0.3cm}x\in G\hspace{0.3cm}l_1=l\oplus\lambda(pc)\hspace{0.3cm}l_1\nleqslant\lambda(x)\\\lambda'=\lambda[pc\mapsto l_1]}}{\Gamma\vdash\langle x=e,\sigma,\lambda\rangle\rightarrow\langle \textbf{MISUSE}\rangle}$\\
  & \\
\hline
\end{tabular}
\caption{Semantics for assignment statement}
\label{semantics_assign}
\end{center}
\end{figure}

\begin{table}[htb]
\centering
\caption{Computing labels for an assignment statement}
\label{tbl:assignment}
\begin{tabular}{l|l}
\hline
\multicolumn{1}{c|}{$\mathbf{c :: x=y}$, $x$ \textbf{is Local}} & \multicolumn{1}{c}{$\mathbf{c :: x=y}$, $x$ \textbf{is Global}}\\
\hline
\multicolumn{1}{l|}{\textbf{Initial labels:}} & \multicolumn{1}{l}{\textbf{Initial labels:}}\\
$\lambda(x)=(S,\{*\},\{\})$ & $\lambda(x)=(A,\{A,S\},\{A\})$\\
$\lambda(y)=(B,\{B,S\},\{B\})$ & $\lambda(y)=(B,\{B,S\},\{B\})$\\
$\lambda(pc)=(S,\{*\},\{\})$ & $\lambda(pc)=(S,\{*\},\{\})$ \\
\hline
\multicolumn{1}{l|}{\textbf{Label computations:}} & \multicolumn{1}{l}{\textbf{Label computations:}} \\
$l=\lambda(y)$ & $l=\lambda(y)$ \\
$l_1=l\oplus\lambda(pc)$ & $l_1=l\oplus\lambda(pc)$; $\lambda(pc)=l_1$\\
$\lambda(x)=l_1$ & check if $l_1\leqslant\lambda(x)$: \\
$\lambda(pc)=l_1$ & \quad MISUSE as $l_1\nleqslant\lambda(x)$\\
\hline
\multicolumn{1}{l|}{\textbf{Labels after computation:}} & \multicolumn{1}{l}{\textbf{Labels after computation:}} \\
$\lambda(x)=(S,\{B,S\},\{B\})$ & $\lambda(pc)=(S,\{B,S\},\{B\})$ \\
$\lambda(pc)=(S,\{B,S\},\{B\})$ & \\
\hline
\end{tabular}
\end{table}

We could interpret a conditional statement, e.g., $\texttt{if }e:\text{ }c_1\texttt{ else: }c_2$, as causing information flow from the predicate $e$ to both the branches $c_1$ and $c_2$ irrespective of the branch that gets executed during run-time. Therefore, the analyzer performs the following operations: (i) it checks if the labels of global target variables satisfy the IFP, violating which indicates a potential misuse, (ii) updates the labels of local target variables to the most restrictive label that should satisfy the IFP irrespective of the path taken during the execution, and (iii) increases the label of $pc$ to the upper bound of all the labels of source variables appear in both the branches. 
\begin{figure}[htb]
\begin{center}
\fontsize{7}{6}\selectfont
\begin{tabular}{cl}
\hline
  & \\
 (IF) & $\Gamma\vdash\langle e,\sigma,\lambda\rangle\downarrow v^l$ \hspace{0.3cm} $\forall x\in SV(c)\lbrack\lambda'(x)=\lambda(x)\rbrack$\\
  & \\
  & $\dfrac{\let\scriptstyle\textstyle\substack{\forall x \in TV(c)\cap G\lbrack l\oplus\lambda(pc)\oplus cl\leqslant \lambda(x)\rbrack\\\forall x \in TV(c)\cap L\cup\{pc\}\lbrack\lambda'(x)=l\oplus\lambda(x)\rbrack}}{\Gamma\vdash\langle \texttt{if }e:\text{ }c_1\texttt{ else: }c_2, \sigma,\lambda\rangle\rightarrow\langle c_1,\sigma',\lambda'\rangle}$\\
  & \\
  & $\dfrac{\let\scriptstyle\textstyle\substack{\Gamma\vdash\langle e,\sigma,\lambda\rangle\downarrow v^l\hspace{0.3cm}l_1=l\oplus\lambda(pc)\\\exists x\in TV(c)\cap G\lbrack l_1\oplus cl\nleqslant\lambda(x)\rbrack\hspace{0.3cm}\lambda'=\lambda\lbrack pc\mapsto l_1\rbrack}}{\Gamma\vdash\langle\texttt{if }e:\text{ }c_1\texttt{ else: }c_2,\sigma,\lambda\rangle\rightarrow\langle\textbf{MISUSE}\rangle}$\\ 
  & \\
\hline
\end{tabular}
\end{center}
\caption{Semantics for selection statement}
\label{semantics_if}
\end{figure}


For execution of an iteration, e.g., $\texttt{while }e:\text{ }S$, the analyzer requires to accumulate influences of all the variables in the loop condition. Next, it checks if all the potential endpoints in the loop body satisfy the IFP irrespective of whether the loop is ever taken during the run-time. Although the semantics for iteration looks similar to conditional statements, there is a subtle difference: a loop could be unrolled multiple times, which causes a backward information flow that carries the influence of one iteration to subsequent iterations. \textit{Pifthon} provides a solution that unrolls the loop until all the endpoints in the loop body and $pc$ reach their highest label. While unrolling, the $pc$ label carries forward the influence of one iteration to the next one. Semantics of a loop statement is shown in Figure \ref{semantics_while}.
\begin{figure}[htb]
\begin{center}
\fontsize{7}{6}\selectfont
\begin{tabular}{cl}
\hline
  & \\
 (WHILE) & $\dfrac{\let\scriptstyle\textstyle\substack{\Gamma\vdash\langle e,\sigma,\lambda\rangle\downarrow v^l\hspace{0.3cm}l_1=l\oplus\lambda(pc)\\\forall x\in TV(c)\cap G\lbrack l_1\leqslant\lambda(x)\rbrack\\\forall x\in SV(c)\cap L\cup\{pc\}\lbrack\lambda'(x)=\lambda(x),\lambda'(pc)\mapsto l_1\rbrack\\\Gamma\vdash\langle S,\sigma,\lambda'\rangle\rightarrow\langle\epsilon,\sigma',\lambda''\rangle\hspace{0.3cm}\lambda'\neq\lambda''}}{\Gamma\vdash\langle\texttt{while }e:\text{ }c,\sigma,\lambda\rangle\rightarrow\langle \texttt{ while }e:\text{ }c,\sigma',\lambda''\rangle}$\\
  & \\
  & $\dfrac{\let\scriptstyle\textstyle\substack{\Gamma\vdash\langle e,\sigma,\lambda\rangle\downarrow v^l\hspace{0.3cm}l_1=l\oplus\lambda(pc)\\\forall x\in TV(c)\cap G\lbrack l_1\leqslant\lambda(x)\rbrack\\\forall x\in SV(c)\cap L\cup\{pc\}\lbrack\lambda'(x)=\lambda(x),\lambda'(pc)\mapsto l_1\rbrack\\\Gamma\vdash\langle c,\sigma,\lambda'\rangle\rightarrow\langle\epsilon,\sigma',\lambda''\rangle\hspace{0.3cm}\lambda'==\lambda''}}{\Gamma\vdash\langle\texttt{while }e:\text{ }c,\sigma,\lambda\rangle\rightarrow\langle \epsilon,\sigma',\lambda''\rangle}$\\
  & \\
  & $\dfrac{\let\scriptstyle\textstyle\substack{\Gamma\vdash\langle e,\sigma,\lambda\rangle\downarrow v^l\hspace{0.3cm}l_1=l\oplus\lambda(pc)\\\exists x\in TV(c)\cap G\lbrack l_1\nleqslant\lambda(x)\rbrack\hspace{1mm}\lambda'=\lambda\lbrack pc\mapsto l_1\rbrack}}{\Gamma\vdash\langle\texttt{while }e:\text{ }c,\sigma,\lambda\rangle\rightarrow\langle\textbf{MISUSE}\rangle}$\\
  & \\
\hline
\end{tabular}
\end{center}
\caption{Semantics for iteration statement}
\label{semantics_while}
\end{figure}

Consider the termination-sensitive program shown in Table \ref{table:pc_monotonic}, where the labels of global variables $x$ and $y$ are given as $(A,\{A,S\},\{A\})$, and $(B,\{B,S\},\{B\})$ respectively, such that $\lambda(x)\nleqslant\lambda(y)$. Note that, in the first iteration, the $pc$ label is updated at line 2, i.e., $\lambda'(pc)=\lambda(x)\oplus\lambda(pc) = (S,\{A,S\},\{A\})$. \textit{Pifthon} iterates the loop for the second time where $pc$ carries the influence of first iteration. Since there is no other variable present in the loop body, \textit{Pifthon} only checks for the saturation in the $pc$ label. Once the saturation is confirmed after second iteration \textit{Pifthon} terminates the loop and advances the control to the following statement. Note that, during actual execution the program might not terminate at all. The tool now flags a \textbf{MISUSE} of information as $\lambda'(pc)\nleqslant\lambda(y)$. \textit{Pifthon's} output for the program is shown in the Appendix \ref{sec:pifthon_output}. 

\begin{table}[htb]
\centering
\caption{Flow analysis for a while program (Cf. \cite{robling1982cryptography})}
\label{table:example_iteration}
\begin{tabular}{lcc}
  \hline
  \textbf{Statement} & \textbf{Constraints} & $\mathbf{\lambda(pc)}$\\
  \hline
  1\hspace{3mm}y=0 & $\lambda(pc)\oplus\bot\leqslant\lambda(y)$ & $(S,\{*\},\{ \})$\\
  2\hspace{3mm}whie $x==0$:  & $\lambda(pc)=\lambda(x)\oplus\lambda(pc)$ & $1^{st}$ Itr.: $(S,\{A,S\},\{ \})$\\
  3\hspace{6mm}pass & & $2^{nd}$ Itr.: $(S,\{A,S\},\{ \})$\\
  4\hspace{3mm}y=1 & $\lambda(pc)\oplus\bot\nleqslant\lambda(y)$ & \\
  \hline
\end{tabular}
\end{table}

Obvious question is:\textit{what would happen if the labels of loop variables never saturate?} In other words, \textit{will this process of unrolling ever terminate?} It is proved that the labelling mechanism for a \texttt{while} statement will terminate after a maximum of three iterations (cf. Appendix \ref{append:proposition}).

A sequence statement consists of independent statements $c_1,\dots,c_n$. Note that in the execution of $c_1 c_2$,  execution of statement $c_2$ is conditioned on the program control reaching the end point of $c_1$, and $c_2$ executes only after $c_1$. Taking these observations into account, labelling $c_2$ shall be performed in the program context in which the labelling of $c_1$ is already accomplished i.e., the security label of $pc$ at the end of $c_1$ should be considered for the execution of $c_2$. 

\begin{figure}[htb]
\begin{center}
\fontsize{7}{6}\selectfont
\begin{tabular}{lll}
\hline
 (COMPOSE) & $\dfrac{\Gamma\vdash\langle c_1,\sigma,\lambda\rangle\rightarrow\langle c_1^{'},\sigma',\lambda'\rangle}{\Gamma\vdash\langle c_1 c_2,\sigma,\lambda\rangle\rightarrow\langle c_1^{'}c_2,\sigma',\lambda'\rangle}$ & $\dfrac{\Gamma\vdash\langle c_1^{'},\sigma,\lambda\rangle\rightarrow\langle\epsilon,\sigma',\lambda'\rangle}{\Gamma\vdash\langle c_1 c_2,\sigma,\lambda\rangle\rightarrow\langle c_2,\sigma',\lambda'\rangle}$\\
\hline
\end{tabular}
\end{center}
\caption{Semantics for sequence statement}
\label{semantics_seq}
\end{figure}

One can write statements like $h=h$ followed by $l=l$ ($h$ and $l$ are global and denote a \textit{high} and \textit{low} variables respectively) somewhere in the program. In such cases, the platform may indicate ``MISUSE'' as it fails to satisfy the constraint $h\oplus\underline{pc}\leqslant l$. We ignore such corner cases and leave the onus of correcting the code on the programmer.

A function call $f(y_1,\dots,y_n)$ causes information flow from function arguments $y_1,\dots,y_n$ to corresponding parameters $x_1,\dots,x_n$. As PyX follows the \textit{call-by-value} parameter passing mechanism, the parameters act as local variables within the scope of the function body. Therefore, the values and labels of these parameters are initialized with the values and labels evaluated for corresponding arguments. Next, \textit{Pifthon} computes the function body with the $pc$ initialized with mutable label $(p_1,\{*\},\{\})$, where $p_1$ is the subject executing the function. Note that a programmer could provide a clearance label for the scope of the function, which would then be treated as upper bound for $p_1$.
\begin{figure}[htb]
\begin{center}
\fontsize{7}{6}\selectfont
\begin{tabular}{cl}
\hline
  & \\
 (FCALL) & $\text{arg\_list}=\{y_1,\dots,y_n\}$ \hspace{0.3cm} $\text{param\_list}=\{x_1,\dots,x_n\}$ \\
  & \\
  & $\dfrac{\let\scriptstyle\textstyle\substack{\underset{i\in 1\dots n}{\forall}\Gamma\vdash\langle y_i,\sigma,\lambda\rangle\downarrow v_i^{l_i}\hspace{0.3cm}\sigma'=\sigma\lbrack \underset{i\in 1\dots n}{\forall}\phi(x_i)\mapsto v_i\rbrack\\\lambda = \lambda_{init}\hspace{0.3cm}\lambda'=\lambda\lbrack\underset{i\in 1\dots n}{\forall}x_i\mapsto l_i, pc\mapsto\underset{i\in 1\dots n}{\oplus}l_i\rbrack}}{\Gamma\vdash\langle\textbf{\texttt{f}}(y_1,\dots,y_n),\sigma,\lambda\rangle\rightarrow\langle\textbf{\texttt{f-body}},\sigma',\lambda'\rangle}$\\
  & \\
\hline
\end{tabular}
\end{center}
\caption{Semantics of PyX function call}
\label{semantics_function}
\end{figure}

Since function parameters are purely local, any change in the labels during the evaluation process does not affect corresponding arguments. However, it is necessary to have a \texttt{return} statement at the function exit to transfer the changes to the caller.

A \texttt{return} statement causes information to flow from the callee to the caller routine. Therefore, while returning a variable $x$, the returned label must incorporate the influence of the callee function. Further, consider $p$ to be the subject executing the caller. Then, $p$ must be an authorized reader of the returned label. We derive the security semantics for a return statement from the above conditions: (i) if $x$ is a local variable then the returned label is evaluated as $\lambda(x)\oplus\lambda(pc)$; (ii) if $x$ is a global variable, then return label $\lambda(x)$ if the following condition is satisfied: $\lambda(pc)\leqslant\lambda(x)$; if not,  it  flags a message for information misuse, and (iii) check if subject $p$ is a valid reader of the returned label. Violating the last condition paves the way for a controlled downgrading; this is discussed  in the next section.   
\begin{figure}[htb]
\begin{center}
\fontsize{7}{6}\selectfont
\begin{tabular}{cl}
\hline
  & \\
 (RETURN) & $\dfrac{\let\scriptstyle\textstyle\substack{p\in P\hspace{0.3cm}x\notin G\hspace{0.3cm}l=\lambda(x)\oplus\lambda(pc)\hspace{0.3cm}p\in R(l)\\\lambda'=\lambda\lbrack x\mapsto l, pc\mapsto l\rbrack}}{\Gamma\vdash\langle\texttt{return}\text{ }x,\sigma,\lambda\rangle\rightarrow\langle\epsilon,\sigma,\lambda'\rangle}$\\
  & \\
  & $\dfrac{\let\scriptstyle\textstyle\substack{p\in P\hspace{0.3cm}x\in G\hspace{0.3cm}\lambda(pc)\leqslant\lambda(x)\hspace{0.3cm}l=\lambda(x)\hspace{0.3cm}p\in R(l)\\\lambda'=\lambda\lbrack pc\mapsto l\rbrack}}{\Gamma\vdash\langle\texttt{return}\text{ }x,\sigma,\lambda\rangle\rightarrow\langle\epsilon,\sigma,\lambda'\rangle}$\\
  & \\
  & $\dfrac{\let\scriptstyle\textstyle\substack{x\in G\hspace{0.3cm}\lambda(pc)\nleqslant\lambda(x)}}{\Gamma\vdash\langle\texttt{return}\text{ }x,\sigma,\lambda\rangle\rightarrow\langle\textbf{MISUSE}\rangle}$\\
  & \\
\hline
\end{tabular}
\end{center}
\caption{Semantics of PyX return statement}
\label{semantics_return}
\end{figure}

\begin{figure*}[htb]
\begin{center}
\fontsize{7}{6}\selectfont
\begin{tabular}{lcc}
  (DOWNGRADE) & \multicolumn{2}{l}{$\dfrac{ \let\scriptstyle\textstyle\substack{p,p_1\in P \hspace{2mm} x\notin G \hspace{2mm}l=\lambda(pc)\oplus\lambda(x)\hspace{2mm}A(l)=p\wedge(W(l)=\{p\}\vee\{p_1\}\in W(l))\\ R_1=R(l)\cup\{p_1\}\hspace{2mm}\hspace{2mm}\lambda' =\lambda\lbrack x\mapsto (p,R_1,W(l)),pc\mapsto l\rbrack}}{\Gamma\vdash\langle\texttt{return downgrade($x$,$\{p_1\}$)},M,\lambda\rangle\rightarrow\langle\epsilon,M,\lambda'\rangle}$}\\
  & & \\
  & \multicolumn{2}{l}{$\dfrac{\let\scriptstyle\textstyle\substack{p,p_1\in P\hspace{2mm}x\in G\hspace{2mm}\lambda(pc)\leqslant\lambda(x)\hspace{2mm}l=\lambda(x)\hspace{2mm}A(l)=p\wedge(W(l)=\{p\}\vee\{p_1\}\in W(l))\\ R_1=R(l)\cup\{p_1\}\hspace{2mm}\hspace{2mm}\lambda' =\lambda\lbrack x\mapsto (p,R_1,W(l)),pc\mapsto l\rbrack}}{\Gamma\vdash\langle\texttt{return downgrade($x$,$\{p_1\}$)},M,\lambda\rangle\rightarrow\langle\epsilon,M,\lambda'\rangle}$}\\
  & & \\
  & \multicolumn{2}{l}{$\dfrac{x\notin G\hspace{2mm}l=\lambda(pc)\oplus\lambda(x)\hspace{2mm}p_1\notin P\vee A(l)\neq p\vee p_1\notin W(l)\vee W(l)\neq\{p\}\hspace{2mm}\lambda'=\lambda\lbrack pc\mapsto l\rbrack}{\Gamma\vdash\langle\texttt{return downgrade($x$,$\{p_1\}$)},M,\lambda\rangle\rightarrow\langle\textbf{MISUSE}\rangle}$}\\
  & & \\
  & \multicolumn{2}{l}{$\dfrac{\let\scriptstyle\textstyle\substack{x\in G\hspace{2mm}l=\lambda(x)\hspace{2mm}\lambda'=\lambda\lbrack pc\mapsto l\rbrack \\p_1\notin P \vee\lambda(pc)\nleqslant\lambda(x)\vee A(l)\neq p\vee p_1\notin W(l)\vee W(l)\neq\{p\}}}{\Gamma\vdash\langle\texttt{return downgrade($x$,$\{p_1\}$)},M,\lambda\rangle\rightarrow\langle\textbf{MISUSE}\rangle}$}\\
\end{tabular}
\end{center}
\caption{Semantics for downgrading}
\label{down_semantics}
\end{figure*}
\subsection{Soundness Of labelling Semantics}
\label{sec:non-interference}
We show that the labelling semantics \textit{Pifthon} is sound with respect to the classical definition of non-interference \cite{volpano_smith,2011_survey}. To meet the objective, first, we define the observation of an attacker who can inspect the data up to a label $\delta$.


\begin{defn}[Observation]
Given an environment $\Gamma$, a possible observation by an attacker labelled with $\delta$ is defined as a set of variables: \[Obs(\Gamma,\delta)=\{x\mid x\in TV\cup SV,\lambda(x)\leqslant\delta\}\]
\end{defn}

Let $\langle c,\Gamma\rangle\Downarrow_{\delta} o$ denote the execution of  statement $c$ in environment $\Gamma$ producing an observation $o$ with respect to label $\delta$. Two observations $o_1$ and $o_2$ are \textit{indistinguishable} with respect to any label $\delta$, denoted $o_1\simeq_{\delta} o_2$ if $\forall x\in o_1,\exists x\in o_2[\lambda(x)\leqslant\delta]$. Further, assume a \textit{low-equivalence} relation $\approx_{\delta}$ on two program statements $S$ and $S'$ if they differ only in high-security variables. Label $\delta$ defines ``high-security'' as follows: a variable tagged with any security label $h$ is called a high-security variable if it does not satisfy $h\leqslant\delta$. The attacker can inspect only data protected by $\delta$.

Non-interference says that any two low-equivalent statements are non-interfering as long as they produces indistinguishable observations. We write the condition for non-interference (NI) from the perspective of an attacker as shown below:
\[\text{If }S\approx_{\delta}S'\text{ then }(\langle S,\Gamma\rangle\Downarrow_{\delta}o\wedge\langle S',\Gamma\rangle\Downarrow_{\delta}o')\implies o\simeq_{\delta}o'\]
\noindent The following theorem establishes the soundness of \textit{Pifthon} under the above condition of non-interference.
\begin{thm}[Soundness with NI]
In a given environment $\Gamma$, if any two low-equivalent statements $S$ and $S'$ of a program are successfully labelled by \textit{Pifthon}, then the statements are non-interfering with respect to attacker's label $\delta$.
\label{thm:ni_sound}
\end{thm}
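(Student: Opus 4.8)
The plan is to prove Theorem~\ref{thm:ni_sound} by structural induction on the command $c$ (equivalently, on the derivation of the small-step transition relation $\rightarrow$), showing that each labelling rule preserves a suitable invariant: whenever \textit{Pifthon} successfully labels a statement, every variable whose final label is below the attacker's label $\delta$ receives, at each program point, a value that is a function only of the initial values of variables whose labels are below $\delta$. First I would set up the machinery: fix the environment $\Gamma$ and the attacker label $\delta$, and make precise the notion that $S \approx_\delta S'$ means the two statements agree except on high-security variables (those $x$ with $\lambda(x) \nleqslant \delta$). The key structural fact I would extract from the semantics is a \emph{confinement lemma}: if $\Gamma \vdash \langle c,\sigma,\lambda\rangle \rightarrow \langle \epsilon, \sigma', \lambda'\rangle$ and the entry $pc$-label satisfies $\lambda(pc) \nleqslant \delta$, then for every target variable $x \in TV(c)$ either $x$ is written with a value already $\delta$-indistinguishable from its old one, or $\lambda'(x) \nleqslant \delta$ — i.e.\ no command executed under a ``high'' $pc$ can affect the $\delta$-observable portion of the state. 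This is exactly what the \textsc{assign}, \textsc{if}, and \textsc{while} rules enforce: the target local label is updated to $l \oplus \lambda(x)$ with $l \geqslant \lambda(pc)$, and the global-target case demands $l \oplus \lambda(pc) \leqslant \lambda(x)$ on pain of \textbf{MISUSE}, so a successfully labelled global write under high $pc$ is impossible.

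The main induction then proceeds by cases. For \textbf{assignment} $x = e$: evaluating $e$ yields label $l = \bigoplus_{y \in \mathrm{Var}(e)} \lambda(y)$; if $l \leqslant \delta$ then all source variables of $e$ are $\delta$-observable, hence by the induction hypothesis their values agree across the two runs, so $x$ gets $\delta$-indistinguishable values and its new label $l_1 = l \oplus \lambda(pc)$ is $\leqslant \delta$ only when $\lambda(pc) \leqslant \delta$ too — consistent. If $l \nleqslant \delta$, then $x$'s new label is $\geqslant l \nleqslant \delta$, so $x$ drops out of the observation and nothing need be shown. For \textbf{conditionals} $\texttt{if }e: c_1\texttt{ else }c_2$: if the guard label $l \leqslant \delta$, both runs take the same branch and we apply the induction hypothesis to that branch under the updated (but still $\delta$-related) labelling; if $l \nleqslant \delta$, the $pc$ is raised above $\delta$, every local target gets label $l \oplus \lambda(x) \nleqslant \delta$, every global target is forced to satisfy $l \oplus \lambda(pc) \oplus cl \leqslant \lambda(x)$ (else \textbf{MISUSE}), so by the confinement lemma neither branch disturbs the $\delta$-observable state and the two (possibly different) branch executions still yield $\delta$-indistinguishable observations. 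The \textbf{while} case reduces to the conditional case together with the termination result cited from Appendix~\ref{append:proposition} (at most three unrollings), so the saturated labelling $\lambda''$ already over-approximates the effect of every iteration; \textbf{sequence} and \textbf{function call/return} follow by threading the $pc$-label through and invoking the induction hypothesis componentwise, using that the return rules require the caller's subject $p$ to be a reader of the returned label.

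The hard part, I expect, is the \textbf{while} case — specifically, arguing that the \emph{fixed-point} labelling $\lambda''$ computed by finitely many unrollings is a sound summary of \emph{all} runtime iterations, including non-terminating ones, so that the termination-/progress-sensitive leaks are genuinely caught. I would handle this by showing the labels computed over unrollings form a monotone ascending chain in the \texttt{RWFM} lattice that stabilises (leaning on the three-iteration bound), and that once stable, $\lambda''(pc)$ dominates the join of all source labels reachable in the loop body across any number of iterations; hence the global-target check $l_1 \leqslant \lambda(x)$ at the stable point subsumes the check at every iteration, and any observable effect of the loop on a $\delta$-variable would already have triggered \textbf{MISUSE}. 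A secondary subtlety is the interaction of $cl$ (clearance) with $\delta$: I would simply take $\delta \leqslant cl$ without loss of generality (an attacker above clearance sees nothing relevant), which lets the $cl$ conjuncts in the \textsc{if} rule be absorbed. Finally I would assemble these cases: since $S \approx_\delta S'$ and both are successfully labelled, induction gives $o \simeq_\delta o'$, which is precisely the non-interference condition stated before the theorem.
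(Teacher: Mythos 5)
Your proposal is essentially a full Volpano--Smith-style soundness proof: structural induction on the command, a confinement lemma stating that nothing executed under a $pc$-label above $\delta$ can touch the $\delta$-observable state, and a fixed-point argument for \texttt{while} leaning on the three-iteration saturation bound. The paper's actual proof is far more compact and takes a different route: it fixes $\delta=Low$, picks a global target variable $x$ common to $S$ and $S'$, observes that $x\in o$ forces $\lambda(x)=Low$, and then notes that any update to $x$ influenced by a high variable of $S'$ must pass the guard $l\oplus\lambda(pc)\oplus cl\leqslant\lambda(x)$; successful labelling therefore forces $l=\lambda(pc)=cl=Low$, so $x\in o'$ as well and $o\simeq_{Low}o'$ follows. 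The reason the paper can get away with this single-case argument is that its definitions of $Obs$ and $\simeq_\delta$ are phrased purely in terms of labels (membership of a variable in the observation set), not in terms of the values those variables hold, so checking the guard condition on global targets is all that is needed; your induction, by contrast, is aimed at the stronger, value-level statement that low variables compute the same function of low inputs in both runs. Your route costs considerably more machinery (the confinement lemma, the per-rule case analysis, the loop fixed-point argument) but would establish a more robust semantic property and would survive a strengthening of $\simeq_\delta$ to compare values; the paper's route buys brevity at the price of resting entirely on its label-only notion of indistinguishability and on the \textsc{if}-rule guard, and it silently elides the local-variable, sequence, and loop cases that you treat explicitly.
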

\begin{proof}
Let  $\delta=Low$. Let $x$ be a global target variable present in $S$ and $S'$. Let us assume evaluation relations $\langle S,\Gamma\rangle\Downarrow_{Low} o$ and $\langle S',\Gamma\rangle\Downarrow_{Low} o'$. Then $x\in o$ only if $\lambda(x)=Low$. Note that $S\approx_{Low}S'$, a potential update in the label of $x$ due to changes in high-security variable in $S'$ is guarded by  $l\oplus\lambda(pc)\oplus cl\leqslant\lambda(x)$. Since $\lambda(x)=Low$, and \textit{Pifthon} has labelled the variables successfully, we  immediately infer that $l=\lambda(pc)=cl=Low$ and $x\in o'$. This establishes the fact that $o\simeq_{Low}o'$. Hence the theorem follows.  
\end{proof}

\section{Downgrading}
\label{sec:downgrading}
Quite often, it is required to relax the confidentiality level and reveal some level of information to specific stakeholders for the successful completion of the transaction. For this purpose, the notion of \textit{Declassification} or \textit{Downgrading} needs to be captured either implicitly or explicitly. We illustrate the use of downgrading through the canonical example of a password update program \cite{myers2011attacker}. 

Consider the function \texttt{Password\_Update} that accepts two parameters: a new password ($new\_pwd$) and guessed password ($guess\_pwd$). The new password replaces the existing password ($pwd\_db$) if it matches with the guessed password. The function returns a boolean variable $result$ that is \textit{True} if there is a match or \textit{False} otherwise. A programmer could identify  variables $pwd\_db$, $guess\_pwd$ and $new\_pwd$  as global and $result$ as local. Let subject $A$ invoke the function to update the password retained by subject $B$. The desired security properties for the function are: (i) $guess\_pwd$ and $new\_pwd$ are readable by both subjects $A$ and $B$ but influenced only by $A$, and (ii) $pwd\_db$ is readable only by subject $B$ and influenced by subjects $A,B$. Next, the PyX program for \texttt{Password\_Update} function (shown in Figure \ref{fig:pwdupdate}) and \texttt{RWFM} labels of the global variables derived from the above security properties (shown in Table \ref{table:pwd_update_labels}) are given to \textit{Pifthon} for flow analysis (refer to Appendix \ref{sec:pwd_update_labelling}).

\begin{table}[htb]
\caption{Clearance/Immutable labels of subjects/global obj}
\label{table:pwd_update_labels}
\begin{tabular}{c|c}
\hline
    \textbf{Subject} & \textbf{Clearance Labels}  \\
\hline
     $A$ & $(A,\{A\},\{A,B\})$ \\
     $B$ & $(B,\{B\},\{A,B\})$\\
\hline
    \textbf{Object} & \textbf{\texttt{RWFM} Labels} \\
\hline
    $guess\_pwd$ & $(A,\{A,B\},\{A\})$\\
    $new\_pwd$ & $(A,\{A,B\},\{A\})$\\
    $pwd_db$ & $(B,\{B\},\{A,B\})$ \\
\hline
\end{tabular}
\end{table}
\begin{figure}[htb]
\begin{lstlisting}[frame=single]
def Password_Update(new_pwd, guess_pwd):
	result = False
	if pwd_db == guess_pwd:
		pwd_db = new_pwd
		result = True
	return result
	
guess_pwd = 'oldpwd'
new_pwd = 'mypwd'
success = Password_Update(new_pwd, guess_pwd)
\end{lstlisting}
\caption{PyX program for \texttt{Password\_Update} function}
\label{fig:pwdupdate}
\end{figure}




Observe that $result$ becomes a sensitive object by the time the function returns its' value. In the context of MLS, passing sensitive data (i.e., $result$) to a less-sensitive entity (i.e., subject $A$) would lead to an IFP violation as shown in Figure \ref{fig:downgrade_error}.

\begin{figure}[htb]
    \centering
    \includegraphics[width=8.5cm, height=4cm, keepaspectratio]{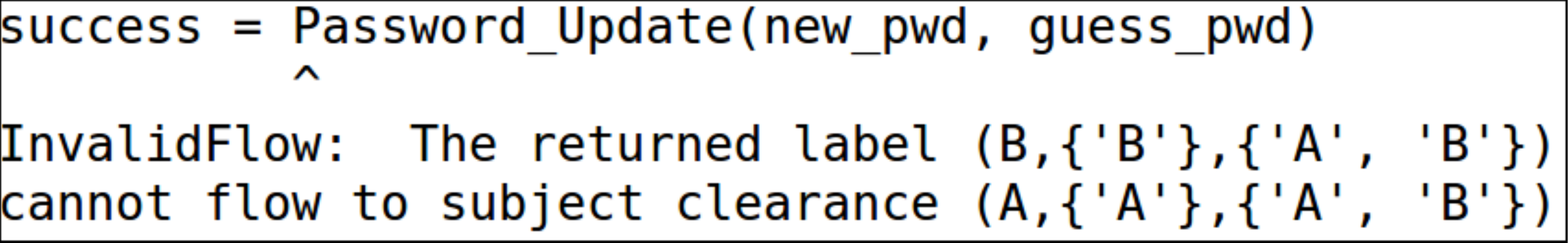}
    \caption{\textit{Pifthon} throwing error}
    \label{fig:downgrade_error}
\end{figure}

This demands a controlled \textit{downgrading}. There are two possibilities for introducing downgrading at the point of returning the value: \textit{first}, have an assertion that ensures downgrading explicitly, and \textit{second}, perform downgrading implicitly at the function return. Pifthon follows the former approach where the expression \texttt{downgrade($x$,$\{p_1,\dots\}$)} enables the programmer to specify a set of new readers.  While the existing IFC literature primarily follows the downgrading mechanism of \cite{myers2001jif,stefan2011disjunction} which is similar to discretionary, \textit{Pifthon} follows the downgrading rule of \texttt{RWFM} where downgrading is limited to the principals who had influenced the information at some earlier stages of information flow -- thus providing specific constraints:\\
The construct \texttt{downgrade($x$,$\{p_1\}$)} executed by $p$ adds $p_1$ to the readers list of label of $x$ only if:
\begin{itemize}
\item $p$ is the owner of the label and sole contributor of $x$ then $p$ may add any principal, e.g., $p_1$ into the readers set of $x$,
\item $p_1$ is one of the sources that influenced $x$ earlier then $p_1$ may be added to the readers list of $x$. 
\end{itemize}

As per the semantics of downgrading shown in Figure \ref{down_semantics},  \texttt{downgrade($x$,$\{p_1\}$)} executed by  principal $p$ relabels variable $x$ by adding reader $p_1$ into the current set of readers on performing the checks described above. Downgrade mechanism above  enforces a constraints on drop in confidentiality.


Thus, explicit downgrading \texttt{return downgrade(result, \{`A'\})} enables principal $A$ to access the outcome of the computation in password update program. 
\section{Pifthon Experiences}
\label{sec:experiences}

Here, we illustrate an effective use of \textit{Pifthon} in the analysis of Man-in-the-Middle (MITM) attack in Diffie-Hellman (DH) cryptographic key exchange protocol \cite{diffie1976new}. 

\noindent{\textbf{Case Study:}} Consider two entities $A$ and $B$ generating a shared private key $K$ using DH protocol.  Let $I$ be the intruder, and $S$ denote the system executing the program. A MITM attack is an outcome of two interleaving runs of DH protocol. In the first run, $I$ initiates a session with $A$ impersonating $B$. In the second run, $I$ further initiates a session with $B$ impersonating $A$. We instrument the attack in a PyX code and feed the program to \textit{Pifthon} along with the static labels of the global variables and subject clearances of functions given in Table \ref{tab:label_dh_protocol} . Labels comply with the security specification of the protocol. Suppose the subject executing the program has clearance label $(S,\{S\},\{A,B,S,I\})$. Then, \textit{Pifthon} performs the following label computation that prevents MITM attack (for simplicity, the network has been ignored):\\
(1) $A$ and $B$ agree on two public prime numbers $g$ and $p$.\\
(2) $A$ chooses a large random number $a$ as her private key and computes $m_a = g^a\text{ mod }p$ that evaluates the label of $m_a$ as $(A,\{A,S\},\{A,B\})$.\\
(3) Since the label does not include $B$ in the readers set, we downgrade the label of $m_a$ from $(A,\{A,S\},\{A,B\})$ to $(A,\{A,B,S\},\{A,B\})$ before sending the message to $B$.\\
\hspace{1cm}(3a) $I$ chooses a random number $i$ as her private key to evaluate $m_i=g^i\text{ mod }p$, which she sends to $A$ impersonating $B$. Message $m_i$ obtains the label $(I,\{I,S\},\{A,B,I\})$, and further downgraded  to $(I,\{I,A,S\},\{A,B,I\})$ before sending it to $A$.\\
(4) Similarly, $B$ chooses a number $b$ to be her private key and computes $m_b = g^b\text{ mod }p$. The label of $m_b$ is downgraded to $(B,\{A,B,S\},\{A,B\})$ before sending it to $A$.\\
\hspace{1cm}(4a) $I$ computes $m_i=g^i\text{ mod }p$ and downgrades the label of $m_i$ to $(I,\{I,B,S\},\{A,B,I\})$ before sending it to $B$ impersonating $A$.\\
(5) Next, $A$ computes shared private key $K_{AI}$ by evaluating $(g^i)^a\text{ mod }p$. $K_{AI}$ obtains  label $(A,\{A,S\},\{A,B,I\})$.\\
(6) Similarly, $B$ obtains shared private key $K_{BI}=(g^i)^b\text{ mod }p$ that receives label $(B,\{B,S\},\{A,B,I\})$.\\

Thus, any attempt to communicate with $B$ would require $A$ to encrypt the message using $K_{AI}$ and subsequently downgrade the label of the encrypted message to $B$. Since $A$ downgrades the label only to $B$, naturally intruder $I$ would fail to access the message. Furthermore, an attempt to intercept communication by the intruder even at an early stage (3(a) or 4(a)) will fail as messages $m_a$ and $m_b$ are downgraded, particularly for $A$ and $B$ respectively.  

\begin{table}[htb]
\small
\caption{Given \texttt{RWFM} labels for Diffie-Hellman-MITM}
\label{tab:label_dh_protocol}
\begin{minipage}[t]{4cm}
\centering
\begin{tabular}{c|c|c|c}
\hline
 Objects & \multicolumn{3}{c}{\texttt{RWFM} labels}\\
\cline{2-4}
 & Owner & Readers & Writers \\
\hline
 $g$ & $S$ & $\{*\}$ & $\{A,B\}$\\
 $p$ & $S$ & $\{*\}$ & $\{A,B\}$\\
 $a$ & $A$ & $\{A,S\}$ & $\{A\}$\\
 $b$ & $B$ & $\{B,S\}$ & $\{B\}$\\
 $i$ & $I$ & $\{B,S\}$ & $\{B\}$\\
\hline 
\end{tabular}
\end{minipage}
\hspace{2cm}
\begin{minipage}[t]{6cm}
\centering
\begin{tabular}{c|c|c|c}
\hline
 Functions & \multicolumn{3}{c}{\texttt{RWFM} labels} \\
\cline{2-4}
 & Owner & Readers & Writers \\
\hline
 creating\_m\_a & $A$ & $\{A,S\}$ & $\{A,B,S\}$\\
 creating\_k\_ai & $A$ & $\{A,S\}$ & $\{A,B,S\}$\\
 creating\_m\_b & $B$ & $\{B,S\}$ & $\{A,B,S\}$\\
 creating\_k\_bi & $B$ & $\{B,S\}$ & $\{A,B,S\}$\\
 creating\_m\_i & $I$ & $\{I,S\}$ & $\{A,B,I,S\}$\\
\hline 
\end{tabular}
\end{minipage}
\end{table}
We have also used \textit{Pifthon} for building a spectrum of security-critical benchmark applications with varying security properties and sizes. Table \ref{table:tried_programs} lists a few well-known security applications and cryptography protocols that we have developed on our secure platform. Thus, we have tested \textit{Pifthon} with the intention to identify flow-leaks (for security applications) and MITM attacks (for cryptography protocols) for a given sets of global variables and labels.

\begin{table}[htb]
\centering
\small
\caption{List of application developed using \textit{Pifthon}}
\label{table:tried_programs}
\begin{tabular}{c|l}
\hline
\textbf{Application Type}  & \multicolumn{1}{c}{\textbf{Applications}}\\
\hline
\multirow{5}{2cm}{Security Applications} & Meeting scheduling system \cite{laminar}\\
 & Conf. review system $\lambda\text{Chair}$ \cite{stefan2011flexible}\\
 & Password update program \cite{myers2011attacker}\\
 & WebTax example \cite{myers2000protecting}\\
 & Vickrey auction \cite{chong2006decentralized}\\
\hline
\multirow{6}{2cm}{Cryptography Protocols} &  Needham-Schroeder Public Key Protocol \cite{needham1978using}\\
 & Neuman-Stubblebine Protocol \cite{ord1993note}\\
 & Yahalom Protocol \cite{abadi1990logic}\\
 & Wide-Mouth-Frog Protocol \cite{burrows1989logic}\\
 & Otway-Rees Protocol \cite{otway1987efficient}\\
 & Diffie-Hellman Protocol \cite{diffie1976new}\\
\hline
\end{tabular}
\end{table}

{\bf Usability of \textit{Pifthon}:}
Following features of \textit{Pifthon} improve usability and meet the demands of researchers \cite{jifclipse,hicks2006languages,johnson2015exploring,hammer2010experiences}:\\
$\bullet$ $pc$-monotonic approach enables tracking $pc$-label at any program point, including nested selection and iteration statements;\\
$\bullet$ It has provision to immutably label I/O channels catering to the design of distrusted communication channels;\\
$\bullet$ Since the labelling mechanism is compositional and preserves end-to-end non-interference -- allowing data sharing between mutually distrusted distributed applications;\\
$\bullet$ A single input file contains the label specification for an application that eases alteration in security policies in one place and helps propagate changes throughout the program.

\section{Comparison with Related Work}
\label{sec:related_work}
Here, we present a comparative study in terms of labelling approaches with prominent IFC tools and flow-security against different classes of information channels such as termination-channels.

\noindent{\textbf{Information Flow Control Tools:}\\}
Language-based security literature witnesses a plethora of IFC tools and platforms that have been developed in the last decade to enforce confidentiality and integrity in prevailing programming languages based on the different flow security models \cite{denning1975secure,myers2000protecting,stefan2011disjunction}. For example, security-typed languages and monitors such as Jif \cite{myers2001jif}, JOANA \cite{tooljoana2013atps}, Paragon \cite{broberg2013paragon} for Java, FlowFox \cite{de2012flowfox}, JSFlow \cite{hedin2014jsflow}, IFC4BC \cite{bichhawat2014information} for JavaScript, FlowCaml \cite{simonet2003flow} for Caml, $\lambda_{DSec}$ \cite{zheng2007dynamic} for lambda calculus, LIO \cite{stefan2011flexible}, HLIO \cite{buiras2015hlio}, Flamio \cite{pedersen2019programming} for Haskell and SPARK flow analysis \cite{barnes2003high} for SPARK, and flow secure platforms for instance, Jif/split \cite{zdancewic2002secure}, Asbestos \cite{efstathopoulos2005labels}, HiStar \cite{histar}, Flume \cite{flume}, Aeolus \cite{cheng2012abstractions} are some of the prominent ones. Table \ref{tools_comparison} shows a comparison among some of the secure languages and monitors in terms of different label binding mechanisms (cf. Section \ref{sec:introduction}).

\begin{table}[!ht]
\small
\begin{center}
\caption{Comparison of IFC tools and platforms}
\label{tools_comparison}
\begin{tabular}{c|c|c|c}
\hline
Tools & Labelling & Flow & Termination \\ 
and Platforms & mechanism & -sensitive & -sensitive\\
\hline
Jif & $P_1$ & \xmark & \xmark \\
\hline
Paragon & $P_1$ & \xmark & \xmark \\
\hline
FlowCaml & $P_1$ & \xmark & \xmark \\
\hline
$\lambda_{DSec}$ & $P_3$ & \cmark & \xmark \\
\hline
\texttt{LIO} & $P_2$ & \xmark & \xmark \\
\hline
$\lambda_{l,FS}^{\texttt{LIO}}$ & $P_2+P_4$ & \cmark & \xmark \\
\hline
Aeolus & $P_1$ & \xmark & \xmark \\
\hline
Flamio & $P_2$ & \xmark & \xmark \\
\hline
Pifthon & $P_2+P_4$ & \cmark & \cmark \\
\hline
\end{tabular}
\end{center}
\end{table}

\noindent{\textbf{Comparison with IFC tools:\\}}
Jif \cite{myers2001jif} is one of the prominent security typed-languages that follow a static flow analysis with $pc$-reset approach for a subset of Java. Jif follows labelling mechanism $P_1$ that often result in false-positives -- a secure program shown in Table \ref{table:dynamic_labelling} may be identified as insecure by Jif. Further, the semantics of Jif cannot identify flow-leaks due to termination-and progress-channels.

JOANA \cite{tooljoana2013atps} performs a compelling flow analysis for Java, based on a system dependence graph \cite{hammer2009flow} that overcomes some of the shortcomings of Jif. In a dependence graph, each statement is a node, and an edge represents a dependency relation between two statements. Intuitively, a path in the graph that originates at a \textit{high} node and ends at \textit{low} indicates a violation of flow security. However, the flow analysis does not create any dependency between a loop and the following node unless they share an object. This limitation will not detect information leaks due to termination-channels, that makes it termination-insensitive.  

Zheng \textit{et al.} \cite{zheng2007dynamic} were the first to introduce a dynamic flow-sensitive analysis in their work on a security-typed $\lambda$-calculus, $\lambda_{DSec}$, that supports first-class dynamic labels that can be checked and changed at run-time. Contrary to \textit{Pifthon}, $\lambda_{DSec}$ retains a static $pc$-label that acts as an upper bound of the caller. Thus, the language falls in class $P_3$. Moreover, the non-interference property discussed in $\lambda_{DSec}$ is termination-insensitive.

A labelled IO Haskell library \texttt{LIO} \cite{stefan2011flexible} shares a common paradigm but subsumes the results of $\lambda_{DSec}$. Analogous to program-counter, \texttt{LIO} maintains a \textit{current label} that is mutable and evaluated as the upper bound of all the objects and inputs observed at run-time. While performing an output operation, \texttt{LIO} ensures the current label \textit{can-flow-to} target object label, thus enforces flow security.  

Although our work overlaps a bit with that of \texttt{LIO}, subtle differences are as follows: (i) compared to compile-time analysis, \texttt{LIO} follows a run-time \textit{floating-label} approach, (ii) \texttt{LIO} is termination-insensitive \cite{stefan2012addressing}, and (iii) unlike achieving flow-senstivity for each intermediate object, \texttt{LIO} is flow-sensitive for the current label only, hence follows class $P_2$. 

To overcome flow-insensitivity in \texttt{LIO}, an extension introduces a  meta-label for each reference label \cite{buiras2014dynamic}. Meta-label describes the confidentiality of  \texttt{LIO} reference label. Then upgrading a reference label also considers the meta-labels besides the label on the data stored in the reference. Authors presented a precise semantics for an extension of $\lambda$-calculus called $\lambda_{l,FS}^{\texttt{LIO}}$. However, contrary to \textit{Pifthon}, in the sequential context, the non-interference followed by this flow-sensitive extension of \texttt{LIO} is termination-insensitive.

Another Haskell library, i.e., Flamio \cite{pedersen2019programming}, enforces coarse-grained dynamic information flow control, called FLAM \cite{arden2015flow}. Flamio is a language instantiation of FLAM into LIO that can leverage it's decentralized authorization model and distributed proof search of trust relationships. However, similar to \texttt{LIO}, Flamio is flow-sensitive in \textit{computation context} only, hence follows class $P_2$, and non-interference is termination-insensitive.

Complementing recent IFC efforts that follow a run-time analysis, e.g., \texttt{LIO} \cite{stefan2011flexible}, $\lambda_{l,FS}^{\texttt{LIO}}$ \cite{buiras2014dynamic}, \textit{Pifthon} performs a compile-time flow analysis for a dialect of Python language. The run-time analysis has the following advantages: (i) it allows running non-interfering executions selectively, providing an essence of ``lazy'' analysis; (ii) it enables users to perceive control dependency precisely. For instance, the analysis could accurately identify the source index of a list data structure or the source information that depends on the user inputs. Whereas, the compile-time approaches, including \textit{Pifthon}, aim to capture all possible flow channels, including the non-executing control paths, thus mitigate the risk of information leaks. However, it may not accurately identify the information source that depends on the expression resolution or user inputs.

From the above comparative study, we can infer that  \textit{Pifthon} stands out in enabling the user to precisely determine the leaks, in the context of flow-sensitive information flow analysis that includes termination- and progress-sensitive flow channels.
\section{Conclusions}
\label{conclusion}
In this paper, we have proposed an information flow analysis approach based on a semi-dynamic (hybrid) labelling. Using the approach, we have built {\it Pifthon\/} for analyzing information flow violations for a variant of Python language called PyX. The platform can be easily  adapted  for a variety of imperative programming languages. The security platform is quite precise as compared to various other platforms and is capable of analyzing termination-sensitive programs. We have demonstrated applications of the platform for a variety of typical security-critical programs and cryptography protocols used for security analysis. We have been using the platform for analyzing programs of reasonable size and found quite encouraging as it finds flow violations that help the user to fix the same. Our approach is proved to be sound in terms of classical non-interference freedom. 

Further work is in progress to include features like exception handling, operations on mutable data structures, object creation, and initialization, etc., that invariably demand run-time analysis in \textit{Pifthon} and to extend the platform for concurrent programs capable of performing IO.
\bibliographystyle{ACM-Reference-Format}
\bibliography{main}

\newpage
\appendix
\section{Termination of $\mathbf{Pifthon}$ Labelling Mechanism}
\label{append:proposition}
\begin{prop}
For any given PyX program $c$ not containing iteration, and execution state $[\sigma,\lambda]$, the evaluation of $\Gamma\vdash\langle c,\sigma,\lambda\rangle$ in \textit{Pifthon} always terminates. 
\label{prop:wo_loop_termination}
\end{prop}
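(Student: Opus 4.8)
The plan is to proceed by structural induction on the command $c$, using the grammar of PyX from Figure \ref{pyx_syntax} restricted to the loop-free fragment, i.e.\ $c$ is built from \texttt{pass}, assignments $x=e$, sequential composition $c_1\,c_2$, conditionals \texttt{if }$e$\texttt{: }$c_1$\texttt{ else: }$c_2$, function definitions, function calls, and \texttt{return}/\texttt{downgrade} statements. For each syntactic form I would exhibit that the labelling relation $\rightarrow$ of Figures \ref{semantics_expr}--\ref{down_semantics} rewrites $\langle c,\sigma,\lambda\rangle$ in finitely many steps to a terminal configuration $\langle\epsilon,\sigma',\lambda'\rangle$ or to $\langle\textbf{MISUSE}\rangle$. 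A suitable termination measure is the size (number of AST nodes) of the command component of the configuration: I would check that every rule except the expression-evaluation rules strictly decreases this measure, and that expression evaluation itself terminates because expressions in PyX contain no recursion (only literals, variables, finitely-nested operators, and calls to non-recursive functions with variable arguments).

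The key steps, in order, would be: (1) establish a sub-lemma that $\Gamma\vdash\langle e,\sigma,\lambda\rangle\downarrow v^l$ is always derivable in finitely many steps for any expression $e$ — immediate by induction on $e$, with the only non-trivial case being $fcall$, where the call-by-value discipline and the standing assumption of non-recursive functions bound the nesting depth, and the inductive hypothesis applies to the strictly smaller function body; (2) base cases \texttt{pass} and $x=e$: one step (after the finite expression evaluation), yielding $\langle\epsilon,\sigma',\lambda'\rangle$ or \textbf{MISUSE}, reading off the rules in Figures \ref{semantics_expr} and \ref{semantics_assign}; (3) the \texttt{return}/\texttt{downgrade} cases: likewise one step after finite expression handling, by Figures \ref{semantics_return} and \ref{down_semantics}; (4) the inductive case $c_1\,c_2$: by the COMPOSE rules (Figure \ref{semantics_seq}), reduction of $c_1$ terminates by the inductive hypothesis reaching $\langle\epsilon,\sigma',\lambda'\rangle$ (or \textbf{MISUSE}, in which case we are done), after which control passes to $c_2$, which terminates by the inductive hypothesis applied in the updated state; (5) the conditional case: by the IF rules (Figure \ref{semantics_if}), the analyzer takes a single step that updates the labels of all targets in $TV(c)$ and $pc$ and then hands control to $c_1$ (the semantics analyses one branch syntactically while having already absorbed the label effects of both), which is structurally smaller, so the inductive hypothesis closes it; the \textbf{MISUSE} alternative is a single step; (6) function call: by FCALL (Figure \ref{semantics_function}) a single step installs parameter values/labels and the initial $pc$, after which the function body — a strictly smaller loop-free command, by non-recursiveness — terminates by the inductive hypothesis.

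I expect the main obstacle to be bookkeeping rather than conceptual: pinning down precisely why the conditional rule may be treated as a single-step reduction to the analysis of one branch (one must argue that the label updates on $TV(c)\cap L\cup\{pc\}$ in the premise of the IF rule already account for both branches, so that subsequent analysis of $c_1$ cannot reintroduce non-termination via $c_2$), and confirming that the ASSIGN/RETURN rules for \emph{global} targets genuinely terminate — they do, since on a constraint violation they step to \textbf{MISUSE} and otherwise to $\langle\epsilon,\dots\rangle$, with no rule that loops. A secondary subtlety is making the induction well-founded in the presence of $fcall$: the cleanest route is a lexicographic measure (maximal call-nesting depth of the program's function table, then AST size of the current command), exploiting that the set of function definitions is fixed and acyclic so the call-depth is a finite constant determined by the program. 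Once these points are nailed down, the proposition follows by assembling the cases.
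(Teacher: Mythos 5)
Your proposal takes essentially the same route as the paper: a structural induction on the loop-free command, with \texttt{pass} and $x=e$ as base cases and sequencing/conditionals handled by the inductive hypothesis. The paper's own proof is only a two-sentence sketch of exactly this induction, so your additional care with expression evaluation, the single-step treatment of the IF rule, and the lexicographic measure for non-recursive \textit{fcall} fills in details the paper leaves implicit rather than diverging from it.
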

\begin{proof}
The proof can be shown by structural induction. The proposition holds good for the base cases \texttt{pass}, and $x=e$. For the inductive step, it is easy to prove that if the proposition holds for $c_1$ and $c_2$, then it also holds for conditional and sequence statements.
\end{proof}

\begin{prop}
For any given statement $c$ that does not contain nested loop, and execution context $[\sigma,\lambda]$, the evaluation $\Gamma\vdash\langle\texttt{while }e:c,\sigma,\lambda\rangle$ in \textit{Pifthon} always terminates.
\label{prop:loop_termination}
\end{prop}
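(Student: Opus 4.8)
The plan is to analyze the monotone behaviour of the labelling map $\lambda$ across successive unrollings of the loop body and to bound the number of iterations by a constant. First I would fix notation: let $\lambda_0$ be the labelling at loop entry, and let $\lambda_{k+1}$ be the labelling obtained after the $(k{+}1)$-st symbolic unrolling of $\texttt{while }e:c$, i.e.\ after evaluating $\Gamma\vdash\langle c,\sigma,\lambda_k'\rangle\rightarrow\langle\epsilon,\sigma',\lambda_{k+1}\rangle$ where $\lambda_k'$ sets $pc$ to $l\oplus\lambda_k(pc)$ with $l$ the label of the guard $e$. By Proposition~\ref{prop:wo_loop_termination}, since $c$ contains no nested loop, each individual unrolling terminates; so the only way the whole evaluation could diverge is by performing infinitely many unrollings, which the rule (WHILE) does exactly when $\lambda'\neq\lambda''$ forever. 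Hence it suffices to show the sequence $(\lambda_k)_k$ stabilises, and in fact does so within three steps.

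The key observation is \emph{monotonicity of every tracked label along the unrolling}. For each local variable $x\in SV(c)\cup TV(c)$ and for $pc$, the assignment and conditional rules only ever replace $\lambda(x)$ by $l\oplus\lambda(x)$ (a join with the old value), and $pc$ is likewise only ever joined upward; global variables are never relabelled, only checked. Therefore $\lambda_0(x)\leqslant\lambda_1(x)\leqslant\lambda_2(x)\leqslant\cdots$ for every tracked $x$, an ascending chain in the \texttt{RWFM} lattice $SC=P\times 2^P\times 2^P$. The second observation is that the increments are driven by a \emph{fixed finite pool of source labels}: the only labels ever joined in are (images under $\lambda$ of) the finitely many global variables and literals appearing in the loop body plus the loop guard. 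Write $J$ for the join of all of these together with $\lambda_0(pc)$ and the entry labels of the locals. Then already after one unrolling every tracked label is $\leqslant J$ (it is a join of a subset of the generators of $J$), and $J$ is the join of a \emph{finite} set $\mathcal{G}$ of generating labels, so the chain lives inside the finite sublattice generated by $\mathcal{G}$ under $\oplus$ — in fact, because on the readers/writers components $\oplus$ is just intersection/union and these are idempotent and commutative, the \emph{second} unrolling already re-injects the same pool of generators and reaches the join; a (possible) third unrolling only confirms the fixpoint. This yields the ``maximum of three iterations'' bound stated in the text: $\lambda_2=\lambda_3$, so the guard $\lambda'==\lambda''$ of the terminating (WHILE) rule fires by the third pass.

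I would structure the write-up as: (1) a lemma that each step of $\rightarrow$ on a loop-free body only joins labels upward (immediate from the (ASSIGN)/(IF)/(COMPOSE) rules in Figures~\ref{semantics_assign}, \ref{semantics_if}, \ref{semantics_seq}); (2) a lemma that the set of distinct ``source'' labels that can be joined during any unrolling is a fixed finite set $\mathcal{G}$ determined by the static text of $\texttt{while }e:c$; (3) conclude that $(\lambda_k)$ is an ascending chain bounded above by $\bigoplus\mathcal{G}$ inside a finite lattice, hence stabilises, and pin the stabilisation index at $\le 3$ using idempotence of $\oplus$ on the $R,W$ components; (4) observe that when $\lambda_k=\lambda_{k+1}$ the second (WHILE) rule applies and the MISUSE alternative is also decidable at each step, so evaluation halts. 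The main obstacle is step (3): getting the \emph{sharp} constant $3$ (rather than just "finite") requires care about how $pc$ threads the influence of one iteration into the next — one has to check that after the first unrolling $pc$ already carries every generator that the body can contribute, so the body's effect on the locals in the second unrolling is the full join, and a third pass produces no change. Phrasing that cleanly — ideally by showing $\lambda_1'(pc)=\bigoplus\mathcal{G}$ and that under this $pc$ one unrolling is idempotent on all tracked labels — is the crux; the rest is bookkeeping over the rule set and an appeal to Proposition~\ref{prop:wo_loop_termination} for per-unrolling termination.
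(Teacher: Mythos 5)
Your proposal is correct and follows essentially the same route as the paper's proof: reduce divergence to infinitely many unrollings (per-unrolling termination coming from Proposition~\ref{prop:wo_loop_termination}), observe that after the first unrolling $\lambda(pc)$ has absorbed the join of all global source labels in $e$ and $c$, that the second unrolling therefore pushes this saturated join into the local targets, and that the third unrolling produces no change, so the terminating (WHILE) rule fires. The only difference is presentational --- the paper carries out the explicit three-iteration computation of $\lambda_1,\dots,\lambda_6$ where you phrase the same facts as monotonicity plus idempotence of $\oplus$ over a fixed finite generator set --- and you correctly identify the saturation of $pc$ after one pass as the crux.
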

\begin{proof}
From the Pifthon labelling semantics, we note that the only case in which the evaluation of $\Gamma\vdash\langle\texttt{while }e:c,\sigma,\lambda\rangle$ does not terminate is when either the evaluation of $\Gamma\vdash\langle c,\sigma,\lambda'\rangle$ does not terminate, or $\Gamma\vdash\langle\texttt{while }e:c,\sigma,\bullet\rangle$ goes into an infinite recursion.

\noindent The former is not possible due to Proposition \ref{prop:wo_loop_termination}. Impossibility of the latter is shown by considering the evaluation of $\Gamma\vdash\langle\texttt{while }e:c,\sigma,\lambda\rangle$:
\begin{enumerate}
  \item $\Gamma\vdash\langle e,\sigma,\lambda\rangle\downarrow v^l$, $\lambda_1(pc) = l\oplus\lambda(pc)$.
  \item $\Gamma\vdash\langle c,\sigma,\lambda_1\rangle\rightarrow\langle\epsilon,\sigma_1,\lambda_2\rangle$.
  \item If $\lambda_2 = \lambda_1$ the evaluation terminates and there is nothing to prove. So we assume that $\lambda_2 \neq \lambda_1$. In this case $\Gamma\vdash\langle\texttt{while }e:c,\sigma,\lambda_2\rangle$ is invoked which proceeds as follows.
  \item $\Gamma\vdash\langle e,\sigma_1,\lambda_2\rangle\downarrow v^l$, $\lambda_3(pc) = l\oplus\lambda_2(pc)$.
  \item $\Gamma\vdash\langle c,\sigma_1,\lambda_3\rangle\rightarrow\langle\epsilon,\sigma_1,\lambda_4\rangle$.
  \item If $\lambda_4 = \lambda_3$ the evaluation terminates and there is nothing to prove. So we assume that $\lambda_4 \neq \lambda_3$. In this case $\Gamma\vdash\langle\texttt{while }e:c,\sigma,\lambda_4\rangle$ is invoked which proceeds as follows.
  \item $\Gamma\vdash\langle e,\sigma_1,\lambda_4\rangle\downarrow v^l$, $\lambda_5(pc) = l\oplus\lambda_4(pc)$. 
  \item $\Gamma\vdash\langle c,\sigma_1,\lambda_5\rangle\rightarrow\langle\epsilon,\sigma_1,\lambda_6\rangle$.
\end{enumerate}

We claim that $\lambda_6 = \lambda_5$. The proof is given below.

\begin{enumerate}
\item First iteration:
		\begin{equation*}
		\begin{split}
		    l &= \lambda(pc)\oplus_{v\in var(e)\cap G}\lambda(v)\\
		    \lambda_1(pc) &= l\oplus\lambda(pc)\\
			\lambda_2(pc) &= \lambda_1(pc) \oplus_{v\in SV(c)\cap G}\lambda(v)\\
 				&= \lambda(pc) \oplus_{v\in var(e)\cap G}\lambda(v) \oplus_{v\in SV(c)\cap G}\lambda(v)\\
 				&= \lambda(pc) \oplus_{v\in (var(e)\cup SV(c))\cap G}\lambda(v)\\
		\end{split}
		\end{equation*}
\item Second iteration:
			\begin{equation*}
			\begin{split}
			\lambda_3(pc) &= \lambda_2(pc)\oplus_{v\in var(e)\cap G}\lambda(v)\\
				&= \lambda(pc)\oplus_{v\in (var(e)\cup SV(c))\cap G}\lambda(v)\\
				& \oplus_{v\in var(e)\cap G}\lambda(v)\\
				&= \lambda(pc)\oplus_{v\in (var(e)\cup SV(c))\cap G}\lambda(v)\\
			\end{split}
			\end{equation*}
			\begin{equation*}
			\begin{split}
			\lambda_4(pc) &= \lambda_3(pc)\oplus_{v\in SV(c)\cap G}\lambda(v)\\
				&= \lambda(pc)\oplus_{v\in (var(e)\cup SV(c))\cap G}\lambda(v)\\
			\lambda_4(x) &= \lambda_3(x)\oplus\lambda_3(pc)\oplus_{v\in SV(c)\cap G}\lambda(v)\\
				&= \lambda_3(x)\oplus\lambda_3(pc)\\
				& \text{This is because the label of $pc$ is already }\\
				& \text{influenced by all the global variables in } c.\\
				&= \lambda_2(x)\oplus\lambda(pc)\oplus_{v\in (var(e)\cup SV(c))\cap G}\lambda(v)
			\end{split}
			\end{equation*}
\item Third iteration:
			\begin{equation*}
			\begin{split}
			\lambda_5(pc) &= \lambda_4(pc)\oplus_{v\in var(e)\cap G}\lambda(v)\\
				&= \lambda(pc)\oplus_{v\in (var(e)\cup SV(c))\cap G}\oplus_{v\in var(e)\cap G}\lambda(v)\\
				&= \lambda(pc)\oplus_{v\in (var(e)\cup SV(c))\cap G}\lambda(v)\\
			\lambda_5(x) &= \lambda_4(x)\\
				&= \lambda_2(x)\oplus\lambda(pc)\oplus_{v\in (var(e)\cup SV(c))\cap G}\lambda(v)
			\end{split}
			\end{equation*}
			\begin{equation*}
			\begin{split}
			\lambda_6(pc) &= \lambda_5(pc)\oplus_{v\in SV(c)\cap G}\lambda(v)\\
				&= \lambda(pc)\oplus_{v\in (var(e)\cup SV(c))\cap G}\lambda(v)\\
			\lambda_6(x) &= \lambda_5(x)\oplus\lambda_5(pc)\oplus_{v\in SV(c)\cap G}\lambda(v)\\
				&= \lambda_2(x)\oplus\lambda(pc)\oplus_{v\in (var(e)\cup SV(c))\cap G}\lambda(v)
			\end{split}
			\end{equation*}
\end{enumerate}
It can be observed that $\lambda_6 = \lambda_5$. Thus,  we can conclude that for the iteration statement, \textit{Pifthon} labelling procedure terminates after a maximum of three iterations.
\end{proof}

\section{Labelling the Password Update Program}
\label{sec:pwd_update_labelling}
\textit{Pifthon} performs the following operations as shown in Table \ref{tbl:pwdupdate}. At line 3, $pc$ label become equal to LUB of the labels of $pwd\_db$ and $guess\_pwd$. Next it checks the constraint for allowing the implicit flow from if-guard to the target variable $pwd\_db$. At line 4, as target $pwd\_db$ is a global variable, therefore, \textit{Pifthon} checks if the condition $\lambda(pc)\oplus\lambda(new\_pwd)\oplus cl\leqslant\lambda(pwd\_db)$ holds true. Finally, at line 5, label of local variable $result$ is updated by the $pc$ label, thus obtain the final label of $result$.

\begin{table}[H]
\caption{Steps performed by \textit{Pifthon} for computing final label of $result$}
\label{tbl:pwdupdate}
\begin{tabular}{cl}
\hline
{\bf Line No.} & \multicolumn{1}{c}{\textbf{Label Computations \& Constraints}} \\
\hline
2 & Initial label of $pc$ is $\lambda(pc)=$ (B,\{*\},\{ \})\\
  & \\
3 & (i) $l=\lambda(pwd\_db)\oplus\lambda(guess\_pwd)=$(B,\{B\},\{A,B\})\\
  & (ii) $\lambda(pc)=l\oplus\lambda(pc)=$(B,\{B\},\{A,B\})\\
  & (iii) $\lambda(pc)\leqslant\lambda(pwd\_db)$\\
  & \\
4 & (i) $\lambda(pc)=\lambda(new\_pwd)\oplus\lambda(pc)=$ (B,\{B\},\{A,B\});\\
  & (ii) $\lambda(pc)\oplus cl\leqslant\lambda(pwd\_db)$\\
  & \\
5 & $\lambda(result)=\lambda(pc)=$(B,\{B\},\{A,B\})\\  
\hline
\end{tabular}
\end{table}

\section{Output of $\mathbf{Pifthon}$ For Programs}
\label{sec:pifthon_output}
\begin{figure}[H]
    \centering
    \includegraphics[width=8.5cm, height=4cm, keepaspectratio]{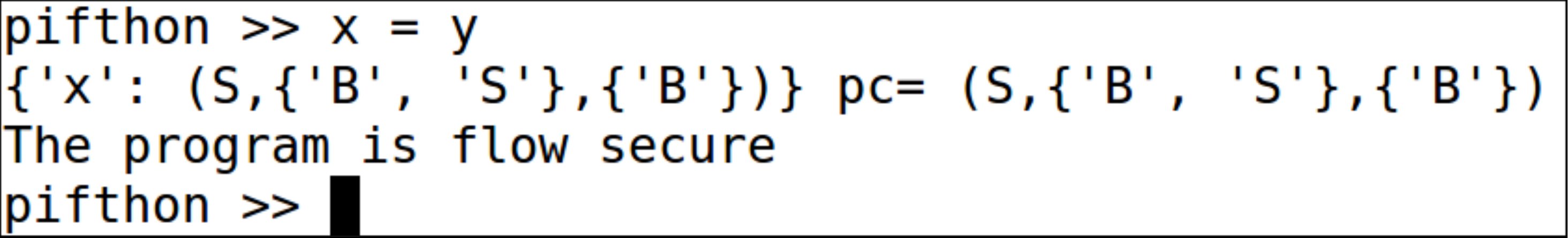}
    \caption{Output of an assignment $x=y$, where $x$ is a local variable, $\lambda(y)=(B,\{B,S\},\{B\})$, and $cl=(S,\{S\},\{A,B\})$ (Cf. Table \ref{tbl:assignment})}
    \label{fig:assignment_safe}
\end{figure}

\begin{figure}[H]
    \centering
    \includegraphics[width=8.5cm, height=4cm, keepaspectratio]{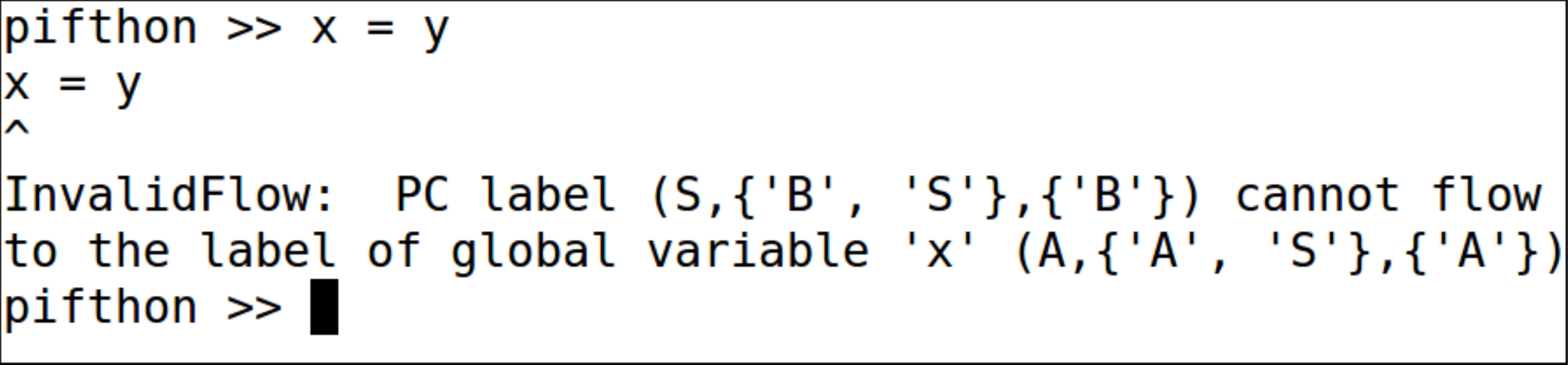}
    \caption{Output of an assignment $x=y$, where $x$ is a global variable, $\lambda(x)=(A,\{A,S\},\{A\})$ $\lambda(y)=(B,\{B,S\},\{B\})$, and $cl=(S,\{S\},\{A,B\})$ (Cf. Table \ref{tbl:assignment})}
    \label{fig:assignment_error}
\end{figure}

\begin{figure}[H]
    \centering
    \includegraphics[width=8.5cm, height=4cm, keepaspectratio]{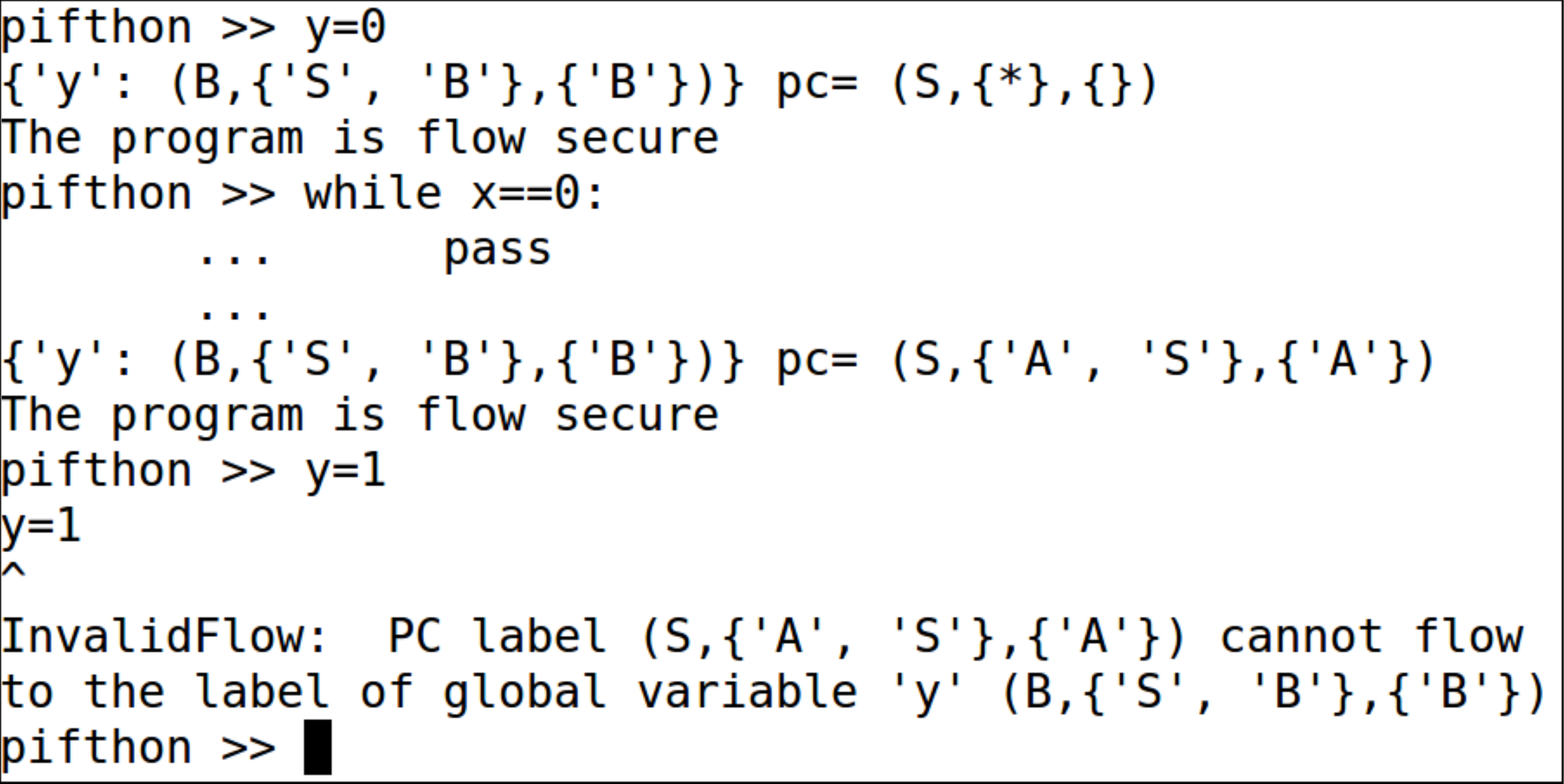}
    \caption{\textit{Pifthon} throws error for a termination-sensitive information leak (Cf. Figure \ref{table:example_iteration})}
    \label{fig:non_termination_error}
\end{figure}


\end{document}